\DeclareMathOperator*{\argmax}{arg\,max}
\DeclareMathOperator*{\argmin}{arg\,min}
\newtheorem{definition}{Definition}
\newtheorem{theorem}{Theorem}
\newtheorem{lemma}{Lemma}
\newtheorem{proposition}{Proposition}
\newtheorem{observation}{Observation}
\DeclareMathOperator*{\E}{\mathbf{E}}
\newcommand{\dd}{\mathrm{d}}
\newcommand{\I}{\mathbb{I}}
\begin{document}
%\begin{titlepage} % Title page for title and abstract only.
\title{Optimal Fixed-Price Mechanism with Signaling}

\author{
Zhikang Fan \\ Renmin University of China\\ \texttt{fanzhikang@ruc.edu.cn}
\and
Weiran Shen\\ Renmin University of China \\ \texttt{shenweiran@ruc.edu.cn}
}

\maketitle

\begin{abstract}
Consider a trade market with one seller and multiple buyers. The seller aims to sell an indivisible item and maximize their revenue. This paper focuses on a simple and popular mechanism--the fixed-price mechanism. Unlike the standard setting, we assume there is information asymmetry between buyers and the seller. Specifically, we allow the seller to design information before setting the fixed price, which implies that we study the mechanism design problem in a broader space. We call this mechanism space the fixed-price signaling mechanism.

We assume that buyers' valuation of the item depends on the quality of the item. The seller can privately observe the item's quality, whereas buyers only see its distribution. In this case, the seller can influence buyers' valuations by strategically disclosing information about the item's quality, thereby adjusting the fixed price. We consider two types of buyers with different levels of rationality: ex-post individual rational (IR) and ex-interim individual rational. We show that when the market has only one buyer, the optimal revenue generated by the fixed-price signaling mechanism is identical to that of the fixed-price mechanism, regardless of the level of rationality. Furthermore, when there are multiple buyers in the market and all of them are ex-post IR, we show that there is no fixed-price mechanism that is obedient for all buyers. However, if all buyers are ex-interim IR, we show that the optimal fixed-price signaling mechanism will generate more revenue for the seller than the fixed-price mechanism.
\end{abstract}
\setcounter{tocdepth}{1} 

\onehalfspacing

\newpage 

\section{Introduction}
\label{sec:introduction}

The phenomenon of information asymmetry is ubiquitous in the real world and has garnered extensive research attention in both computer science and economics, including security~\citep{rabinovich2015information,xu2015exploring}, advertising~\citep{badanidiyuru2018targeting,emek2014signaling}, transportation~\citep{wu2021value,gan2022bayesian,fan2024fast} and voting~\citep{castiglioni2020persuading}. In these applications, an agent with more information, referred to as the sender, can influence the behavior of other agents by strategically disclosing information, which is also known as ``Bayesian persuasion''~\citep{kamenica2011bayesian} or information design~\citep{bergemann2016information}. In these information asymmetric scenarios, the sender is often in a leading position and can obtain higher utility than when he has no additional information.

% cite selling an item through persuasion
There is also a body of work studying the optimal auction design when the seller can disclose information~\citep{esHo2007optimal,chen2020signalling,wei2022price}, or under specific auction formats such as second-price auction~\citep{bergemann2022optimal} and posted price auction~\citep{castiglioni2022signaling}. In this paper, we introduce information asymmetry into a trade market and limit the design space to a simple format--the fixed-price mechanism. We aim to study the impact of enabling the seller to design information on the optimal mechanism and seller revenue within the fixed-price mechanism space, and answer the following questions:
\begin{enumerate}
    \item What is the optimal mechanism for the seller when they can design information?
    \item Can allow the seller to design information lead to higher revenue? 
\end{enumerate}

Specifically, we consider a trade market with one seller and multiple buyers. The seller aims to sell an indivisible item. The quality of the item can only be observed by the seller, whereas buyers can only see its distribution. Each buyer's valuation of the item depends on its quality. Given the information advantage regarding the item's quality, the seller can design information before determining the fixed price. \cite{castiglioni2022signaling} also investigated a similar situation, except that the signal sent by the seller is correlated with the proposed price and buyers arrive sequentially. 

\subsection{Our Contributions}
In this paper, we consider two types of behavior patterns of buyers: the ex-post individual rational buyers and the ex-interim individual rational buyers. We refer to the mechanism space in which the seller can design information within a fixed-price mechanism as the fixed-price signaling mechanism. Therefore, throughout this paper, we focus on two types of design spaces: the fixed-price mechanism and the fixed-price signaling mechanism.

\paragraph{Optimal fixed-price mechanism} Before delving into our main problems, we first discuss the optimal fixed-price mechanism for the seller under these two behavior patterns. For ex-post IR buyers, the optimal fixed price is a trade-off between the item's price and the probability that the item will be sold. However, for ex-interim IR buyers, the seller can continuously raise the item's price as long as at least one buyer is willing to buy it.

\paragraph{Optimal fixed-price signaling mechanism for single buyer} As a warm-up, we consider a special case of our original setup and study the optimal fixed-price signaling mechanism when there is only one buyer in the market. Surprisingly, we find that the revenue generated from the optimal fixed-price signaling mechanism is identical to that of the fixed-price mechanism, regardless of the behavior patterns employed by the buyer. This implies that in this case, allowing the seller to design information will not bring him more revenue, and the advantage in information does not translate into an advantage in terms of revenue.

\paragraph{Optimal fixed-price signaling mechanism for multiple buyers} Furthermore, we investigate the optimal fixed-price signaling mechanism when there are multiple buyers in the market. For ex-post IR buyers, we find that there is no fixed-price signaling mechanism that is obedient for all the buyers. However, if we assume that buyers cannot buy the item when the seller does not recommend it, we show that this problem can be solved in closed form. As for ex-interim IR buyers, we show that the optimal fixed-price signaling mechanism will bring more revenue for the seller than the fixed-price mechanism.

\subsection{Related Work}
\paragraph{Information design} Our research is grounded in the literature on information design. We adopt the ``Bayesian persuasion'' framework, proposed by the seminal work~\citep{kamenica2011bayesian}, to model how the seller designs information. Like most follow-up work, we assume there is only one side that has private information. One of the related works in this line is the model proposed by~\citet{Castiglioni2021SignalingIB}, where an informed sender persuades a set of uninformed receivers. However, there is a substantial distinction between persuasion and selling an item through persuasion. In our context, the seller has only one item for sale, so even if many buyers are willing to buy the item, only one buyer will eventually get the item.

\paragraph{Joint design mechanism and information} Our work aligns with the research on the combination of mechanism design with information design \citep{esHo2007optimal,wei2022price, castiglioni2022signaling, FanS23, fan2024optimal}. \citet{esHo2007optimal} also consider a setting where an item seller sells one indivisible item to multiple buyers. However, in their model, the seller cannot observe the item's quality. Another difference lies in the mechanism space, they do not restrict design space while we focus on the fixed-price mechanism. \citet{wei2022price} study a single buyer setting under a general design space and give a closed-form solution. In contrast, we consider there are multiple buyers in the market. Closer to us are the series of works that focus on information design under specific auction formats, such as second-price auction~\citep{bergemann2022optimal} and posted price auction~\citep{castiglioni2022signaling}. \citet{castiglioni2022signaling} consider a posted price auction where buyers arrive sequentially and their valuations for the item depend on a random state that is only observed by the seller. However, the key difference between their model and ours is that in their model, the seller's price function depends on the signal sent by the seller but we focus on a constant price. \citet{chen2020signalling} also adopt a type-dependent pricing strategy for the seller in a setting with one buyer and a binary type space.

\paragraph{\bf Structure of the paper}
The rest of the paper is organized as follows. Section \ref{sec:preliminary} describes the model and two types of mechanism space considered in this paper. Section \ref{sec:fixed-price} investigates the optimal fixed-price mechanism without signaling. In section \ref{sec:fixed-price signaling}, we first study the optimal fixed-price signaling mechanism design problem under one buyer setting and later generalize it to the multi-buyer setting. We summarize all the results in Section \ref{sec:summary}. Finally, we conclude in Section \ref{sec:conclusion}.

\section{Preliminaries}
\label{sec:preliminary}
\subsection{Model}
\label{sec:model}
Consider a trade market with one seller and multiple buyers. The seller has an indivisible item for sale and buyers want to buy it. The item has a quality, denoted by $q\in Q$, which can only be observed by the seller. We assume that $q$ is a random variable drawn from publicly known distribution $G(q)$. The support of $q$ is $Q = [q_1, q_2 ]$. $G(q)$ is differentiable in its support, with the corresponding probability density function $g(q)$.

All buyers possess no private information. Hence their valuation of the item only depends on its quality $q$. Let $N = \{1, \dots, n \}$ denote the set of buyers and $v_i: Q \mapsto \mathbb{R}^+$ be the valuation function of buyer $i$. The higher the quality of the item, the higher the buyer's valuation of the item should be. Thus, we also assume that $v_i(q)$ is monotone increasing with respect to $q$ for all $i$. Subsequently, we can define the inverse function of $v_i(\cdot)$ as $v_i^{-1}(\cdot)$. All buyers observe the public distribution $G(q)$ and derive a prior valuation for the item. Formally, the prior valuation of buyer $i$ can be denoted by $\E_{q\sim G(q)} [v_i(q)]$.

\paragraph{Rationality} We consider two types of buyers: the ex-post rational buyer and the ex-interim rational buyer. Below, we define these two types of buyers.

\begin{definition}[Ex-post rational buyer]
An ex-post rational buyer will purchase an item if and only if his valuation of the item is not less than $p$ after knowing the actual $q$. Formally, it is equivalent to:
\begin{align*}
    v_i(q) \ge p.
\end{align*}
\end{definition}
We can also define the ex-post utility of the buyer as follows:
\begin{align*}
    U_{post}(\pi, p) = v_i(q) -p.
\end{align*}

\begin{definition}[Ex-interim rational buyer]
An ex-interim rational buyer will purchase an item if and only if his expected valuation of the item is not less than $p$. Formally, it is equivalent to:
\begin{align*}
    \E_{q} [v_i(q)] \ge p.
\end{align*}
\end{definition}
Similarly, we define the ex-interim utility of the buyer as follows:
\begin{align*}
    U_{interim}(\pi, p) = \E_{q} [v_i(q)] -p.
\end{align*}

\subsection{Mechanism Space}
From the seller's perspective, we aim to design a revenue-maximizing item-selling mechanism for the seller. Next, we describe the set of mechanisms considered in this paper. Thus, the optimal mechanism is the one that generates the highest revenue within the mechanism space.

Firstly, we require that the price of the item be identical for all buyers, that is, there is no price discrimination. Second, we allow the sender to design information before deciding on the fixed price. According to whether the seller can design information, we consider two types of mechanisms separately: the fixed-price mechanism and the fixed-price signaling mechanism.

\subsubsection*{Fixed-price Mechanism}
The fixed-price mechanism is a straightforward yet widely used mechanism~\citep{roesler2017buyer,zhang2021fixed,liu2023improved} where the seller sets a single, constant price for an item that applies to all buyers. We denote this fixed price as $p$. When the item is sold, the buyer who obtains the item pays $p$ to the seller.

In a fixed-price mechanism, the interaction between the seller and buyers takes place as follows:
\begin{enumerate}
    \item The seller observes the quality of the item $q$, while the buyers observe the distribution $G(q)$.
    \item The seller sets a constant price $p$ for the item and makes it known to all buyers.
    \item If at least one buyer is willing to buy the item\footnote{If multiple buyers are willing to buy the item, the seller selects one among them randomly.}, then the item is sold and the seller receives $p$. Otherwise, the seller retains the item.
\end{enumerate}

\subsubsection*{Fixed-price Mechanism with Signaling }
As mentioned in step (1) of the fixed price mechanism, the seller has more information about the item's quality than buyers. According to the well-known ``Bayesian Persuasion'' framework \citep{kamenica2011bayesian}, the seller can expand the design space by sending signals to buyers before setting prices, thereby influencing buyers' behavior and potentially obtaining higher revenue. We refer to this new design space as the fixed-price signaling mechanism.

\paragraph{Information design} Following the ``Bayesian Persuasion'' framework, the seller can disclose information by way of signaling. Specifically, the seller first commits to a signaling scheme, which is a mapping from the quality set to a distribution over a signal set. Then, after observing the item's quality, the seller will send a signal to each buyer based on the committed signaling scheme. After receiving the signal, each buyer will update their belief over $q$ based on the Bayes update rule.

Next, we formally describe the fixed-price signaling mechanisms.

\begin{definition}[Fixed-price signaling mechanism]
A fixed-price signaling mechanism $\mathcal{M}$ can be described by a tuple $(\pi, p)$, where:
\begin{itemize}
    \item $\pi: Q \mapsto \Delta(\Sigma)$ is the signaling scheme and $\Sigma$ is a signal set. When the item's quality is $q$, the seller will send signal $\sigma \in \Sigma$ with probability $\pi(q, \sigma)$.
    \item $p$ is a constant price for the item. %When the item is sold, the buyer pays $p$ to the seller.
\end{itemize}
\end{definition}

In a fixed-price signaling mechanism $\mathcal{M} = (\pi, p)$, the interaction between the seller and buyers occurs as follows:
\begin{enumerate}
    \item The seller observes the item's quality $q$, and the buyers observe the distribution $G(q)$.
    \item The seller sends signal $\sigma \in \Sigma$ drawn from distribution $\pi(q, \cdot)$.
    \item After receiving the signal, buyers update their beliefs and decide whether to buy.
    \item If at least one buyer is willing to buy, then the item is sold and the seller receives $p$.
\end{enumerate}

\section{Fixed-Price Mechanism Without Signaling}
\label{sec:fixed-price}
In this section, we discuss the optimal fixed-price mechanism for the seller.

\subsection{Ex-post Rational Buyers}
%For buyers who are ex-post rational,
For ex-post rational buyers, the higher the price of an item, the lower the probability that a buyer will buy it. Hence, the seller needs to strike a balance between the probability of the item being sold and its price when designing the fixed-price mechanism.

\begin{proposition}
When buyers are ex-post rational, the optimal fixed price $p^*$ should be set as follows:
\begin{align*}
    p^* \in \argmax_p Rev_{fix}(p),
\end{align*}
where $Rev_{fix}(p) = [1- \prod_{i\in N} G(v_i^{-1}(p))] \cdot p$ denotes the expected revenue from setting a fixed price $p$.

\end{proposition}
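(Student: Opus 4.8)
The plan is to show that the seller's expected revenue, as a function of the fixed price $p$, is exactly $Rev_{fix}(p) = [1 - \prod_{i \in N} G(v_i^{-1}(p))] \cdot p$, after which the claimed characterization of $p^*$ follows immediately by definition of optimality within the fixed-price mechanism space. The key observation is that when the seller posts price $p$, an ex-post rational buyer $i$ purchases if and only if $v_i(q) \ge p$, which—since $v_i$ is monotone increasing with inverse $v_i^{-1}$—is equivalent to $q \ge v_i^{-1}(p)$. Note that the seller observes $q$ before posting the price, but since the price is constant (no signaling, and the price itself may not be allowed to depend on $q$ beyond what is already captured), the relevant quantity is the \emph{ex-ante} expected revenue taken over $q \sim G$; this is the object $Rev_{fix}(p)$ that the seller optimizes.

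First I would fix an arbitrary price $p$ and compute the probability that the item is sold. The item is sold if at least one buyer is willing to buy, i.e., if there exists $i \in N$ with $q \ge v_i^{-1}(p)$. The complementary event—no buyer buys—is $\{q < v_i^{-1}(p) \text{ for all } i\}$, equivalently $q < \min_i v_i^{-1}(p)$. Since $q \sim G$ and $G$ is the CDF, the probability of this event can be written; and because the event "buyer $i$ does not buy" is the event $\{q < v_i^{-1}(p)\}$ with probability $G(v_i^{-1}(p))$, and all these events are driven by the \emph{same} realization of $q$, their intersection is simply $\{q < \min_i v_i^{-1}(p)\}$. I would then argue that $\prod_{i\in N} G(v_i^{-1}(p))$ is the right expression: this holds because the intersection over $i$ of $\{q < v_i^{-1}(p)\}$ equals $\{q < \min_i v_i^{-1}(p)\}$, and one needs to reconcile this with the product form. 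In fact the product form is correct precisely when we interpret the "no sale" probability as $G(\min_i v_i^{-1}(p)) = \min_i G(v_i^{-1}(p))$; the statement's use of $\prod$ suggests the intended model treats buyers' quality observations (or the tie-structure) in a way that makes these independent, or more likely there is an implicit assumption that I should flag. I would therefore present the computation in the form the paper intends, noting that the probability of no sale is $\prod_{i\in N} G(v_i^{-1}(p))$, so the probability of sale is $1 - \prod_{i\in N} G(v_i^{-1}(p))$.

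Next, since the seller receives exactly $p$ when the item is sold and $0$ otherwise, the expected revenue from price $p$ is (probability of sale) $\times p = [1 - \prod_{i\in N} G(v_i^{-1}(p))] \cdot p$, which is $Rev_{fix}(p)$. Finally, because the fixed-price mechanism is fully parameterized by the single scalar $p$, the seller's optimization problem is to maximize $Rev_{fix}(p)$ over all admissible $p$, so any optimal fixed price $p^*$ satisfies $p^* \in \argmax_p Rev_{fix}(p)$, as claimed.

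The main obstacle I anticipate is the reconciliation of the product form $\prod_{i \in N} G(v_i^{-1}(p))$ with the fact that all buyers condition on the same random quality $q$: under a literal reading, the no-sale event is the single event $\{q < \min_i v_i^{-1}(p)\}$ with probability $\min_i G(v_i^{-1}(p))$, not the product. Resolving this requires either invoking an implicit independence/sampling assumption in the model or interpreting $v_i^{-1}(p)$ appropriately; I would make this modeling point explicit at the start of the proof so that the subsequent revenue computation is unambiguous, and otherwise the argument is a direct and short calculation.
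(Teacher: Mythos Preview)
Your approach is essentially identical to the paper's: compute, for each buyer $i$, the probability that $v_i(q)<p$ as $G(v_i^{-1}(p))$, take the product over $i$ as the no-sale probability, multiply the complementary probability by $p$, and then optimize. The paper's proof does exactly this without further justification.

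Your flagged obstacle is real and worth stating plainly: in the paper's model there is a \emph{single} quality $q$ common to all buyers, so the events $\{v_i(q)<p\}$ are perfectly dependent and their intersection has probability $G\bigl(\min_i v_i^{-1}(p)\bigr)=\min_i G(v_i^{-1}(p))$, not $\prod_i G(v_i^{-1}(p))$. The paper's own proof writes the product without comment, so you are not missing a hidden assumption that reconciles the two---the discrepancy is present in the paper itself. For the purpose of matching the stated proposition you should proceed exactly as you outline (and as the paper does), but your instinct to make the modeling point explicit is correct rather than a gap in your argument.
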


\begin{proof}
For ex-post rational buyers, they will only be willing to buy the item if the valuation is greater than or equal to $p$ after knowing the quality $q$. In fixed-price mechanisms, buyers have a common prior belief $G(q)$. Thus, if the item's price is $p$, the probability that buyer $i$'s valuation of the item is less than $p$ is:
\begin{align*}
    Pr\{ v_i(q) < p \} = Pr\{ q < v_i^{-1}(p) \} = G(v_i^{-1}(p)).
\end{align*}
Then the probability that all buyers' valuation is less than $p$ is $\prod_{i\in N} G(v_i^{-1}(p))$, which also represents the probability that the item cannot be sold. So the probability that at least one buyer is willing to buy the item is $1 - \prod_{i\in N} G(v_i^{-1}(p))$.

Based on the above analysis, the seller's expected revenue from setting a fixed price $p$ can be written as:
\begin{align*}
    Rev_{fix}(p) = [1- \prod_{i\in N} G(v_i^{-1}(p))] \cdot p.
\end{align*}
Then the seller can optimally set the price as follows:
\begin{align*}
    p^* \in \argmax_p Rev_{fix}(p).
\end{align*}
This concludes the proof.
\end{proof}

\subsection{Ex-interim Rational Buyers}
In fixed-price mechanisms, since there is no additional information about $q$, the distribution $G(q)$ remains constant. The buyer's valuation of the item, which is related to $G(q)$, is also a constant. Consequently, the seller can increase the price as much as they can while ensuring that at least one buyer will buy the item.

\begin{proposition}
When buyers are ex-interim rational, the optimal fixed price $p^*$ should be set as:
\begin{align*}
    p^* = \max_i \bar{v}_i,
\end{align*}
where $\bar{v}_i = \E_{q\sim G(q)}[v_i(q)]$ is buyer $i$'s expected valuation of the item.
\end{proposition}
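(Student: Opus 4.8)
The plan is to reduce the problem to maximizing a simple step function of the price. First I would observe that in a fixed-price mechanism (without signaling) no information about $q$ is ever transmitted to the buyers, so each buyer $i$'s posterior over $q$ coincides with the prior $G$, and hence buyer $i$'s relevant expected valuation is the constant $\bar v_i = \E_{q\sim G(q)}[v_i(q)]$. By the definition of an ex-interim rational buyer, buyer $i$ is willing to purchase at price $p$ if and only if $\bar v_i \ge p$.

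Next I would compute the seller's expected revenue as a function of $p$. Since the event ``buyer $i$ is willing to buy'' is deterministic given $p$ (there is no residual randomness in the buyers' decisions once $p$ is fixed), the item is sold if and only if at least one buyer is willing to buy, i.e., if and only if $\max_{i\in N}\bar v_i \ge p$. Therefore the revenue is
\begin{align*}
    Rev(p) = p \cdot \I\!\left[\, p \le \max_{i\in N}\bar v_i \,\right],
\end{align*}
using the convention (from the ex-interim IR definition) that a buyer whose expected valuation exactly equals $p$ still buys, so the inequality is weak and the indicator is $1$ at $p=\max_i\bar v_i$.

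Finally I would maximize this expression: on the interval $\bigl[0,\max_i\bar v_i\bigr]$ the function equals $p$ and is strictly increasing, while for $p>\max_i\bar v_i$ it is $0$. Hence the unique maximizer is $p^* = \max_{i\in N}\bar v_i$, with optimal revenue $\max_{i\in N}\bar v_i$, which is exactly the claimed formula. I do not anticipate a genuine obstacle here; the only points requiring a little care are (i) justifying that the plain fixed-price mechanism reveals nothing, so the ex-interim valuation is the prior-based constant $\bar v_i$ rather than a signal-dependent quantity, and (ii) handling the tie at $p=\max_i\bar v_i$ correctly via the weak inequality in the IR condition so that the supremum is actually attained.
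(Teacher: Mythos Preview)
Your proposal is correct and follows essentially the same approach as the paper: both observe that without signaling each buyer's ex-interim valuation is the constant $\bar v_i$, derive the revenue as $p\cdot \I\{p\le \max_i \bar v_i\}$, and conclude that the maximizer is $p^*=\max_i \bar v_i$. Your additional remarks about no information being revealed and the tie at $p=\max_i\bar v_i$ are just minor elaborations on the same argument.
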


\begin{proof}
For ex-interim rational buyers, they will only be willing to buy the item if their expected valuation of the item is greater than or equal to $p$. Note that in fixed-price mechanisms, the valuation of buyer $i$ for the item is a constant:
\begin{align*}
     \bar{v}_i =  \E_{q\sim G(q)} [v_i(q)].
\end{align*}
Thus the buyer $i$ will not be willing to buy the item if the fixed price $p > \bar{v}_i$. Moreover, there is no buyer will buy the item if $p > \max_i \bar{v}_i$. 

Denoted by $ p_0 = \max_i \bar{v}_i$, which is the highest price at which at least one buyer is willing to purchase the item. Then the seller's revenue from setting a fixed price $p$ can be written as:
\begin{align*}
    Rev_{fix} (p) = \I \{ p \leq p_0 \} \cdot p,
\end{align*}
where $\I\{ \cdot\}$ is an indicator function. It is apparent that to maximize revenue, the fixed price should be set at $p_0$, that is:
\begin{align*}
     p^* = \argmax_p Rev_{fix}(p) = p_0.
\end{align*}
This concludes the proof.
\end{proof}

\section{Fixed-Price Mechanism with Signaling}
\label{sec:fixed-price signaling}
In the previous section, we discussed the optimal fixed-price mechanism. In this section, we consider a broader space--the fixed-price signaling mechanism in which the seller is allowed to design information before determining the price of the item.

\subsection{Warm-up: Single Buyer in the Market}
In this section, as a warm-up, we consider a special case of our original model. We assume there is only one buyer in the market, so the seller only needs to design information for one player. In this scenario, each signal $\sigma \in \Sigma$ only needs to be one-dimensional.

Upon receiving signal $\sigma$, the buyer will update his belief about $q$ and decide whether to buy the item. According to the revelation principle, it is without loss of generality to regard each signal as an action recommendation, since each signal will induce a posterior belief that leads to a certain action~\cite{kamenica2011bayesian,dughmi2016algorithmic}. In our setup, there are two actions for the buyer: buy or not buy. Hence, we only need two signals in set $\Sigma$. We say a mechanism is obedient if the buyer will always follow the action recommendation.

\begin{definition}[Obedience]
A mechanism $(\pi, p)$ is obedient if the buyer has no incentive to deviate from the action recommendation sent by the seller.
\end{definition}

Let $\Sigma = \{0, 1 \}$, where signal $1$ corresponds to ``buy'' and signal $0$ corresponds to ``not buy''. For simplicity, we use $\pi(q)$ to denote the probability of sending signal $1$ when the item's quality is $q$. Naturally, $1-\pi(q)$ denotes the probability of sending signal $0$. 

Upon receiving signal 1, the buyer will update his belief over $q$ as follows:
\begin{align}
\label{eq:update 1}
    g(q|1) = \frac{ \pi(q) \cdot g(q) }{ \int_{q' \in Q} \pi(q') \cdot g(q') \,\dd q' }.
\end{align}
Similarly, after receiving signal 0, the posterior belief of the buyer over $q$ is:
\begin{align}
\label{eq:update 0}
    g(q|0) = \frac{ [1-\pi(q) ] \cdot g(q) \,\dd q }{ \int_{q' \in Q } [1- \pi(q')] \cdot g(q') \,\dd q'  } .
\end{align}

Given an obedient mechanism $(\pi, p)$, the seller's revenue can be written as:
\begin{align}
\label{eq:rev sig}
    Rev_{sig} (\pi, p) = \int_{q\in Q} \pi(q) g(q) \,\dd q \cdot p.
\end{align}

Next, we discuss what constraints an obedient mechanism should satisfy and what the optimal mechanism is when facing an ex-post rational buyer and an ex-interim rational buyer, respectively.

\subsubsection{Ex-post rational buyer}
We first discuss what constraints should an obedience mechanism satisfy when facing an ex-post rational buyer.

Specifically, to ensure the buyer's obedience, two constraints need to be imposed on the mechanism $(\pi, p)$: (1) After receiving signal 1, the buyer's ex-post utility from purchasing the item should be at least 0; (2) After receiving signal 0, the buyer's ex-post utility from purchasing the item should be at most 0.

When receiving signal 1, the buyer obtains posterior belief $g(q|1)$ based on Equation \eqref{eq:update 1}. Given this, we derive the probability that the valuation of the buyer is less than $p$ as follows:
\begin{align*}
    Pr\{ v(q) < p \mid 1 \} =& Pr\{ q <v^{-1} (p) \mid 1 \}\\
    =& \int_{q_1}^{v^{-1}(p)} g(q|1) \,\dd q\\
    =& \frac{\int_{q_1}^{v^{-1}(p)} \pi(q) \cdot g(q) \,\dd q }{\int_{q'} \pi(q') \cdot g(q') \,\dd q'}.
\end{align*}

Then the probability that the buyer is willing to buy the item is $1- Pr\{ v(q) < p \mid 1 \}$. Now we only need to ensure that after receiving signal 1, the probability $1-Pr\{ v(q) < p \mid 1 \} = 1$, that is:
\begin{align}
\label{eq:ob signal 1}
    \int_{q_1}^{v^{-1}(p)} \pi(q) \cdot g(q) \,\dd q = 0.
\end{align}
Similarly, after receiving signal 0, the buyer derives posterior belief $g(q|0)$ based on Equation \eqref{eq:update 0}. Given this, the probability that the buyer's valuation is less than $p$ is:
\begin{align*}
    Pr\{ v(q) < p \mid 0 \} &= \int_{q_1}^{v^{-1}(p)} g(q|0) \,\dd q \nonumber \\
    &= \frac{\int_{q_1}^{v^{-1}(p) } [1-\pi(q)] \cdot g(q) \,\dd q  }{\int_{q'\in Q}  [1-\pi(q')] \cdot g(q') \,\dd q' } .
\end{align*}
Then we need to ensure that after receiving signal 0, the above probability is equal to 1, which is equivalent to:
\begin{align}
\label{eq:ob signal 0}
    \int_{q_1}^{v^{-1}(p) } [1-\pi(q)] g(q) \,\dd q = \int_{Q}  [1-\pi(q)] g(q) \,\dd q .
\end{align}

\begin{theorem}
When the buyer is ex-post rational, the following signaling scheme $\pi^*$ and fixed price $p^*$ forms an optimal fixed-price signaling mechanism:
\begin{align*}
    \pi^*(q) = \begin{cases}
        0 & \text{ if } q < v^{-1}(p) \\
        1 & \text{ otherwise } 
    \end{cases},
\end{align*}
\begin{align*}
    p^* =  \argmax_p [1-G(v^{-1}(p))] \cdot p.
\end{align*}
\end{theorem}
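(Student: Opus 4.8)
The plan is to show two things: first, that the proposed pair $(\pi^*, p^*)$ is indeed an obedient fixed-price signaling mechanism; and second, that no obedient mechanism can generate revenue strictly greater than $Rev_{sig}(\pi^*, p^*)$, which we will see equals the optimal fixed-price (no-signaling) revenue from Proposition~1 (specialized to $n=1$). Together these establish optimality within the fixed-price signaling class.

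For the first part, I would verify that $\pi^*$ satisfies the two obedience constraints \eqref{eq:ob signal 1} and \eqref{eq:ob signal 0} derived above. Since $\pi^*(q) = 0$ for all $q < v^{-1}(p^*)$, the integrand $\pi^*(q) g(q)$ vanishes on $[q_1, v^{-1}(p^*))$, so \eqref{eq:ob signal 1} holds trivially. For \eqref{eq:ob signal 0}, note that $1 - \pi^*(q) = 1$ on $[q_1, v^{-1}(p^*))$ and $1 - \pi^*(q) = 0$ on $[v^{-1}(p^*), q_2]$, so both sides equal $\int_{q_1}^{v^{-1}(p^*)} g(q)\,\dd q = G(v^{-1}(p^*))$; the equality holds. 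Then I compute the revenue directly: $\int_{q \in Q} \pi^*(q) g(q)\,\dd q = \int_{v^{-1}(p^*)}^{q_2} g(q)\,\dd q = 1 - G(v^{-1}(p^*))$, so by \eqref{eq:rev sig}, $Rev_{sig}(\pi^*, p^*) = [1 - G(v^{-1}(p^*))] \cdot p^*$, which by the stated choice of $p^*$ is exactly $\max_p [1 - G(v^{-1}(p))] \cdot p = \max_p Rev_{fix}(p)$ for $n=1$.

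For the upper bound — which I expect to be the main obstacle — I would argue that any obedient mechanism $(\pi, p)$ has revenue at most $\max_p Rev_{fix}(p)$. Fix such a mechanism. The key observation is that obedience constraint \eqref{eq:ob signal 1} forces $\pi(q) = 0$ for (almost) all $q$ with $v(q) < p$, i.e., all $q < v^{-1}(p)$: since $\pi(q) \ge 0$ and $g(q) > 0$ on the support, a vanishing integral of $\pi(q) g(q)$ over $[q_1, v^{-1}(p))$ implies $\pi(q) = 0$ a.e.\ there. Consequently $\int_{q \in Q} \pi(q) g(q)\,\dd q = \int_{v^{-1}(p)}^{q_2} \pi(q) g(q)\,\dd q \le \int_{v^{-1}(p)}^{q_2} g(q)\,\dd q = 1 - G(v^{-1}(p))$, using $\pi(q) \le 1$. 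Therefore $Rev_{sig}(\pi, p) \le [1 - G(v^{-1}(p))] \cdot p \le \max_{p'} Rev_{fix}(p')$. Combining with the first part gives that $(\pi^*, p^*)$ attains this maximum, hence is optimal; as a byproduct this proves the claim from the introduction that signaling confers no revenue advantage in the single-buyer ex-post case. The only subtlety worth a careful sentence is the measure-theoretic step ``vanishing integral of a nonnegative integrand implies the integrand is zero a.e.,'' which is standard but should be stated, together with the remark that modifying $\pi$ on a null set does not change revenue or obedience.
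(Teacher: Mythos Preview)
Your proof is correct and follows essentially the same approach as the paper: both hinge on the observation that constraint \eqref{eq:ob signal 1} forces $\pi(q)=0$ (a.e.) on $[q_1, v^{-1}(p))$, after which the revenue is at most $[1-G(v^{-1}(p))]\cdot p$ and hence at most $\max_{p'} Rev_{fix}(p')$. The only structural difference is that the paper uses \emph{both} obedience constraints to argue that $\pi^*$ is the unique obedient scheme for each $p$ (making optimality over $\pi$ immediate), whereas you separate feasibility from an upper bound that needs only constraint \eqref{eq:ob signal 1} and $\pi\le 1$; your version is slightly more explicit about the measure-theoretic step the paper leaves implicit.
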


\begin{proof}
According to obedience constraint \eqref{eq:ob signal 1}, we should set $\pi(q)$ to zero when $q$ is between $q_1$ and $v^{-1}(p)$, that is:
\begin{align*}
    \pi(q) = 0  \text{ if } q_1 \le q < v^{-1}(p).
\end{align*}
And we can rewrite obedience constraint \eqref{eq:ob signal 0} as follows:
\begin{align*}
    \int_{v^{-1}(p)}^{q_2} [1-\pi(q)] g(q)\,\dd q = 0.
\end{align*}
Thus we should set $\pi(q) = 1$ when $q$ is between $v^{-1}(p)$ and $q_2$. Overall, we obtain the optimal signaling scheme as follows:
\begin{align*}
    \pi^*(q) = \begin{cases}
        0 & \text{ if } q_1 \le q < v^{-1}(p) \\
        1 & \text{ if } v^{-1}(p) \le q \le q_2
    \end{cases}.
\end{align*}

%Since the price is a fixed price, after the price $p$ is determined, to maximize the seller's revenue, we should maximize the probability that the item will sell. Hence, we set $\pi(q)$ to one for all $q \ge v^{-1}(p)$. 
 Given the signaling scheme $\pi^*$, the seller's revenue from setting the fixed price as $p$ can be written as:
\begin{align*}
    Rev_{sig}(\pi^*, p) =& Pr\{ \pi^*(q) = 1 \} \cdot p \\
    =& Pr\{ q\ge v^{-1}(p) \} \cdot p \\
    =&[1-G(v^{-1}(p))] \cdot p.
\end{align*}
So the optimal fixed price should be set as follows:
\begin{align*}
    p^* \in \argmax_p Rev_{sig}(\pi^*, p).
\end{align*}
This concludes the proof.
\end{proof}

Surprisingly, the optimal revenue obtained from the fixed-price signaling mechanism is identical to that obtained within the fixed-price mechanism space. 
\begin{proposition}
\label{prop: ex-post IR}
When facing an ex-post rational buyer, the optimal revenue obtained in the fixed-price signaling mechanism is identical to that obtained in the fixed-price mechanism.
\end{proposition}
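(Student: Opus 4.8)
The plan is to compare the two optimal revenues directly by exhibiting a revenue-preserving correspondence between the relevant optimization problems. From Proposition~1, the optimal fixed-price revenue (for a single buyer, $n=1$) is $\max_p [1 - G(v^{-1}(p))] \cdot p$, since the product over $N$ collapses to a single term. From the preceding Theorem, the optimal fixed-price signaling revenue is $\max_p Rev_{sig}(\pi^*, p) = \max_p [1 - G(v^{-1}(p))] \cdot p$, using the explicit optimal signaling scheme $\pi^*$ that pools all qualities $q \ge v^{-1}(p)$ onto the ``buy'' signal and all qualities below onto ``not buy''. So the two expressions being maximized are literally the same function of $p$, and the claim follows.

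Concretely, I would structure the argument in three short steps. First, invoke Proposition~1 specialized to $|N| = 1$ to write the optimal fixed-price revenue as $\max_p [1 - G(v^{-1}(p))] \cdot p$. Second, invoke the Theorem just proven to write the optimal fixed-price signaling revenue as $\max_p [1 - G(v^{-1}(p))] \cdot p$ as well, noting that the Theorem already established $\pi^*$ is optimal among all obedient signaling schemes for any fixed $p$, and then optimizing over $p$. Third, observe these maxima are taken over the same objective and domain, hence are equal; conclude that allowing the seller to design information yields no revenue gain in the single ex-post-rational-buyer case.

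The only genuinely substantive point — and the place where one must be careful rather than where the computation is hard — is to make sure the comparison is ``apples to apples'': the fixed-price signaling mechanism space strictly contains the fixed-price mechanism space (taking $\pi \equiv 1$, i.e., a fully uninformative signal that always recommends ``buy'', reduces the signaling mechanism to an ordinary fixed-price mechanism with the same price), so a priori the signaling optimum is at least as large. The content of the proposition is that it is no larger, which is exactly what the explicit form of $\pi^*$ in the Theorem delivers. I would state this containment explicitly to justify that the only possible direction of strict improvement is ruled out, and then the equality of the two closed-form expressions finishes the proof. I do not expect any real obstacle here; the work was already done in establishing the Theorem, and this proposition is essentially a corollary recording the coincidence of the two optimal values.
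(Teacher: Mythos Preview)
Your proposal is correct and matches the paper's approach: the paper treats this proposition as an immediate corollary of the preceding Theorem and Proposition~1 (specialized to $|N|=1$), observing that both optimal revenues equal $\max_p [1-G(v^{-1}(p))]\cdot p$. Your added remark about the containment of mechanism spaces is a nice touch but not something the paper spells out.
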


This implies that allowing the seller to design information does not result in higher returns for him.

\subsubsection{Ex-interim rational buyer}
Recall that without additional information, the buyer's expected valuation of the item is a constant. However, when the seller can design information, they can influence the buyer's expected valuation through signaling. Next, we discuss what constraints an obedient mechanism should satisfy when facing an ex-interim rational buyer.

Given posterior belief $g(q|1)$, the ex-interim valuation of the buyer for the item is equivalent to:
\begin{align*}
    \E_{q\sim g(q|1)} [v(q)] = \int_{q\in Q} v(q) g(q|1)  \,\dd q.
\end{align*}
So to ensure the buyer is willing to buy the item after receiving signal 1, we need to require that the ex-interim utility of the buyer is no less than 0, that is:
\begin{align}
\label{eq:ob interim original}
    \int_{q\in Q} g(q|1) v(q) \,\dd q - p  \ge 0.
\end{align}
With some simple algebraic manipulations, we obtain:
\begin{align}
\label{eq:ob interim signal 1}
    \int_{q\in Q} \pi(q) [v(q) - p] g(q) \,\dd q \ge 0.
\end{align}
Similarly, after receiving signal 0, the valuation of the buyer is:
\begin{align*}
    \E_{q \sim g(q|0)} [v(q)] = \int_{q\in Q} v(q) g(q| 0) \,\dd q.
\end{align*}
To ensure the buyer will not buy the item after receiving signal 0, we need to require that:
\begin{align*}
    \int_{q\in Q}  v(q) g(q|0) \,\dd q - p \le 0.
\end{align*}
With some simple algebraic manipulations, we obtain:
\begin{align}
\label{eq:ob post signal 0}
    \int_{q\in Q} \pi(q) [v(q) - p] g(q) \,\dd q \ge \E_{q \sim g(q)} [v(q)] - p.
\end{align}

Combine with the seller's objective, we can formulate the optimal mechanism design problem as the following optimization program:
\begin{maxi}
{\pi, p}
{ Rev_{sig} (\pi, p) = \int_{q\in Q} \pi(q) g(q)\,\dd q \cdot p }
{\label{eq:LP}}
{}
\addConstraint{ \int_{q\in Q} \pi(q) [v(q) - p] g(q) \,\dd q }{\ge 0 }{}
\addConstraint{ \int_{q\in Q} \pi(q) [v(q) - p] g(q) \,\dd q }{\ge \E_{q\sim g(q) } [v(q)] - p }{}
%\addConstraint{0 \le \pi(q) \le 1}{}{\forall q\in Q}
\end{maxi}

From the obedience constraint, we can obtain an upper bound of the above optimization problem. %It is easy to see that there is an upper bound to this optimization problem.
\begin{proposition}
The optimization program \eqref{eq:LP} is upper bounded by:
\begin{align*}
    Rev_{sig}(\pi, p) \le \E_{q} [v(q)].
\end{align*}
\end{proposition}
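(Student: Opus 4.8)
The plan is to extract the bound from the first obedience constraint of program \eqref{eq:LP} alone; the second constraint will not be needed. Starting from $\int_{q\in Q} \pi(q)[v(q)-p]\,g(q)\,\dd q \ge 0$, rearrange to obtain $\int_{q\in Q}\pi(q)\,v(q)\,g(q)\,\dd q \ge p\int_{q\in Q}\pi(q)\,g(q)\,\dd q$. The right-hand side is exactly $Rev_{sig}(\pi,p)$ by the definition in \eqref{eq:rev sig}, so it suffices to show that $\int_{q\in Q}\pi(q)\,v(q)\,g(q)\,\dd q \le \E_{q\sim g(q)}[v(q)]$.

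For this last step I would use that $\pi(q)\in[0,1]$ for every $q$ (it is the probability of sending signal $1$) together with $v(q)\ge 0$ (the model assumes $v:Q\mapsto \mathbb{R}^+$). These two facts give the pointwise inequality $\pi(q)\,v(q)\,g(q)\le v(q)\,g(q)$, and integrating over $Q$ yields $\int_{q\in Q}\pi(q)\,v(q)\,g(q)\,\dd q \le \int_{q\in Q} v(q)\,g(q)\,\dd q = \E_{q\sim g(q)}[v(q)]$. Chaining the two inequalities gives $Rev_{sig}(\pi,p)\le \E_q[v(q)]$ for every feasible $(\pi,p)$, which is the claim.

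There is essentially no real obstacle here: the proof is a two-line chain of inequalities, and the only points worth stating explicitly are the non-negativity of the valuation and the fact that $\pi$ is a genuine probability. Intuitively, the signal-$1$ obedience constraint says that the posterior expected valuation conditional on the ``buy'' recommendation is at least $p$; weighting that conditional expectation by the probability of recommending ``buy'' cannot exceed the unconditional expected valuation, which caps the attainable revenue. One could also remark that the second obedience constraint is slack for this bound and becomes relevant only when characterizing when the bound is actually achieved.
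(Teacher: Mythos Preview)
Your proposal is correct and matches the paper's own proof essentially line for line: the paper also uses only the first obedience constraint, rearranges it to bound $Rev_{sig}(\pi,p)$ by $\int_{q\in Q}\pi(q)v(q)g(q)\,\dd q$, and then bounds the latter by $\E_q[v(q)]$. You are slightly more explicit than the paper in justifying the second inequality via $\pi(q)\in[0,1]$ and $v(q)\ge 0$, which is a good addition.
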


\begin{proof}
According to constraint \eqref{eq:ob interim signal 1}, we have:
\begin{align*}
    \int_{q\in Q} \pi(q) g(q) \,\dd q \cdot p &\le  \int_{q\in Q} \pi(q) v(q) g(q) \,\dd q\\
    & \le \int_{q\in Q} v(q) g(q) \, \dd q\\
    & = \E_{q\sim g(q)} [v(q)].
\end{align*}
This concludes the proof.
\end{proof}

Next, we show that we can construct an obedience mechanism that can achieve this upper bound, thus achieving optimal.

\begin{theorem}
\label{the:post one}
When the buyer is ex-interim rational, the following signaling scheme $\pi^*$ and fixed price $p^*$ forms an optimal fixed-price signaling mechanism:
\begin{align*}
    \pi^*(q) = 1, \forall q\in Q \quad \text{ and } \quad p^* = \E_{q} [v(q)].
\end{align*}
\end{theorem}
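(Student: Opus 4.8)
The plan is to verify directly that the proposed pair $(\pi^*, p^*)$ is feasible for the optimization program \eqref{eq:LP} and that it attains the upper bound $\E_{q}[v(q)]$ established in the preceding proposition; optimality then follows immediately.

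First I would check the two obedience constraints. Since $\pi^*(q) = 1$ for every $q$, the left-hand side of both \eqref{eq:ob interim signal 1} and \eqref{eq:ob post signal 0} reduces to $\int_{q\in Q} [v(q) - p^*] g(q)\,\dd q = \E_{q}[v(q)] - p^*$, which equals $0$ by the choice $p^* = \E_{q}[v(q)]$. Hence \eqref{eq:ob interim signal 1} becomes $0 \ge 0$, and \eqref{eq:ob post signal 0} becomes $0 \ge \E_{q}[v(q)] - p^* = 0$; both are satisfied, so $(\pi^*, p^*)$ is obedient. Next I would compute the revenue via \eqref{eq:rev sig}: $Rev_{sig}(\pi^*, p^*) = \int_{q\in Q} \pi^*(q) g(q)\,\dd q \cdot p^* = 1 \cdot \E_{q}[v(q)] = \E_{q}[v(q)]$. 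Since the previous proposition shows $Rev_{sig}(\pi, p) \le \E_{q}[v(q)]$ for every feasible $(\pi, p)$, the mechanism $(\pi^*, p^*)$ is optimal.

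The only subtlety — hardly an obstacle — is that under $\pi^* \equiv 1$ the signal $0$ is never sent, so the posterior $g(q\mid 0)$ in \eqref{eq:update 0} is formally $0/0$ and ill-defined. I would address this by noting that obedience on a probability-zero event is vacuous (the buyer's off-path behavior is irrelevant to revenue), and that the constraint \eqref{eq:ob post signal 0}, being the cleared-denominator form, is exactly the statement that remains meaningful in this degenerate case; any tie-breaking on signal $0$ is consistent with obedience. It is worth remarking that, in contrast to the ex-post case, the optimal scheme here discloses nothing: it pools all qualities into a single ``buy'' recommendation, and the ex-interim buyer, evaluating at the prior mean, is willing to pay the full prior expected value $\E_{q}[v(q)]$.
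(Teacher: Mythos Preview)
Your argument is correct and, if anything, more streamlined than the paper's. You take the direct verification route: check that $(\pi^*,p^*)$ satisfies both obedience constraints, compute its revenue, and match it against the upper bound $\E_q[v(q)]$ already established in the preceding proposition. The paper instead proceeds constructively: it first drops constraint \eqref{eq:ob post signal 0}, argues that in the relaxed problem the seller should make \eqref{eq:ob interim signal 1} bind and send signal~$1$ with maximal probability (yielding $\pi^*\equiv 1$ and $p^*=\E_q[v(q)]$), and only then goes back to verify that the dropped constraint is satisfied. The paper's route explains \emph{how} one might arrive at $(\pi^*,p^*)$, while yours is the cleaner certificate of optimality once the candidate and the upper bound are on the table; your remark on the degeneracy of $g(q\mid 0)$ when signal~$0$ has probability zero is also a nice clarification that the paper leaves implicit.
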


\begin{proof}
The proof process is decomposed into two parts. First, we ignore constraint \eqref{eq:ob post signal 0} and focus on the remaining optimization program. Second, we show that the optimal solution in the relaxed problem also satisfies constraint \eqref{eq:ob post signal 0}, thus remaining optimal in the original program \eqref{eq:LP}.

Firstly, we increase the price $p$ such that the obedience constraint \eqref{eq:ob interim signal 1} is binding, that is:
\begin{align*}
    \int_{q\in Q} \pi(q) v(q) g(q) \,\dd q = \int_{q\in Q} \pi(q) g(q) \,\dd q \cdot p .
\end{align*}
Put the price $p$ on one side, we get:
\begin{align}
\label{eq:price mid process}
    p = \frac{\int_{q\in Q} \pi(q) v(q) g(q) \,\dd q}{\int_{q\in Q} \pi(q) g(q) \,\dd q}.
\end{align}
Constraint \eqref{eq:ob interim signal 1} ensures that the buyer will buy the item after receiving signal 1. Hence, at this price, the buyer will buy the item after receiving signal 1. Therefore, to maximize revenue, the seller should send signal 1 as frequently as possible. We construct the following signaling scheme:
\begin{align*}
    \pi^*(q) = 1, \forall q\in Q,
\end{align*}
which means regardless of the value of $q$, the seller sends signal 1 and the buyer will always buy the item. Then Equation \eqref{eq:price mid process} becomes:
\begin{align*}
    p^* = \int_{q\in Q} v(q) g(q) \,\dd q = \E_q [v(q)] .
\end{align*}

Next, we show the constructed mechanism $(\pi^*, p^*)$ also satisfies the omitted constraint \eqref{eq:ob post signal 0}. Putting $(\pi^*, p^*)$ into constraint \eqref{eq:ob post signal 0}, we obtain:
\begin{align*}
    \int_{q\in Q} [v(q) - p] g(q) \,\dd q &= \int_{q\in Q} v(q) g(q) \,\dd q - p = \E_{q} [v(q)] - p.
\end{align*}

This concludes the proof.
\end{proof}

In hindsight, we can observe that since $\pi= 1$ for all $q\in Q$, the probability of sending signal 0 is 0. This is the reason why when solving the program \eqref{eq:LP}, we can safely ignore constraint \eqref{eq:ob post signal 0}.

Based on Theorem \ref{the:post one}, we have the following observation.
\begin{observation}
When facing an ex-interim rational buyer, the optimal revenue achieved in the fixed-price signaling mechanism is identical to that obtained in the fixed-price mechanism.
\end{observation}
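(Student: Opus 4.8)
The plan is to show that both revenues in question equal the buyer's prior expected valuation, written $\bar v = \E_{q\sim G(q)}[v(q)]$, so that their equality is immediate. There is essentially no obstacle: the Observation is a direct corollary of Theorem~\ref{the:post one} together with the characterization of the optimal fixed-price mechanism for ex-interim rational buyers from Section~\ref{sec:fixed-price}, once both are specialized to the single-buyer case ($n=1$).

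First I would invoke Theorem~\ref{the:post one}: the mechanism $(\pi^*, p^*)$ with $\pi^*(q)=1$ for all $q\in Q$ and $p^* = \E_q[v(q)]$ is an optimal fixed-price signaling mechanism. Plugging it into the revenue expression \eqref{eq:rev sig} gives $Rev_{sig}(\pi^*, p^*) = \int_{q\in Q}\pi^*(q) g(q)\,\dd q \cdot p^* = 1\cdot \bar v = \bar v$. Hence the optimal revenue attainable in the fixed-price signaling mechanism space is exactly $\bar v$.

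Second I would specialize the optimal fixed-price mechanism for ex-interim rational buyers to $n=1$. There the optimal price is $p_0 = \max_i \bar v_i$, which with a single buyer is simply $p_0 = \bar v$; at this price the buyer's ex-interim IR condition $\bar v \ge p_0$ holds (with equality), so by definition the buyer purchases, the item sells with probability one, and the seller's revenue is $Rev_{fix}(p_0) = p_0 = \bar v$. Combining the two computations yields $Rev_{sig}(\pi^*, p^*) = \bar v = Rev_{fix}(p_0)$, which is precisely the Observation. The only point worth checking is that the weak inequality in the ex-interim IR constraint suffices to guarantee trade at price $p_0$, i.e.\ that the tie is resolved in favor of buying; but this is already built into the definition of an ex-interim rational buyer (purchase if and only if $\E_q[v(q)] \ge p$), so no further argument is needed.
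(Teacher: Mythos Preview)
Your proposal is correct and mirrors the paper's reasoning: the Observation is stated immediately after Theorem~\ref{the:post one} as a direct corollary, and the implicit argument is precisely the one you spell out, namely that both the optimal signaling revenue (from Theorem~\ref{the:post one}) and the optimal fixed-price revenue (from Proposition~2 specialized to $n=1$) equal $\bar v=\E_q[v(q)]$. Your added remark about tie-breaking is a fair clarification but not something the paper addresses separately.
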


Combined with Observation \eqref{prop: ex-post IR}, we can draw the conclusion that when in the market with one buyer, allowing the seller to design information will not bring him more revenue.
\begin{proposition}
In a market with one buyer, allowing the seller to design information before setting the fixed price will not bring him more revenue.
\end{proposition}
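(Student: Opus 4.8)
The plan is to derive this proposition as an immediate consequence of the two single-buyer cases already settled, together with the trivial containment of the mechanism spaces. First I would note that every fixed-price mechanism is a special fixed-price signaling mechanism, namely the one whose signaling scheme is uninformative (take $\pi(q)\equiv 1$, so the buyer's posterior equals the prior $G$). Hence the optimal revenue over the fixed-price signaling space is always \emph{at least} the optimal revenue over the fixed-price space, and the only thing left to establish is that signaling cannot strictly help.

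Then I would split into the two rationality regimes. For an ex-post rational buyer, I would invoke the earlier theorem characterizing the optimal fixed-price signaling mechanism: its revenue equals $\max_p [1-G(v^{-1}(p))]\cdot p$, which is precisely $\max_p Rev_{fix}(p)$ for the single-buyer ($n=1$) instance of the optimal fixed-price mechanism. Since the two are literally the same optimization problem, the optimal revenues coincide; this is Proposition~\ref{prop: ex-post IR}. For an ex-interim rational buyer, I would invoke Theorem~\ref{the:post one}: the optimal fixed-price signaling revenue equals $\E_{q}[v(q)]$, attained by $\pi^*\equiv 1$ together with $p^*=\E_{q}[v(q)]$. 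On the other side, the optimal fixed-price mechanism sets $p^*=\bar{v}=\E_{q}[v(q)]$ and therefore also collects $\E_{q}[v(q)]$. Hence equality holds in this regime as well, which is exactly the stated Observation.

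Combining the two cases, in every regime the optimal fixed-price signaling revenue equals the optimal fixed-price revenue, so enlarging the design space to allow signaling before the price is chosen yields no gain for the seller. There is essentially no obstacle here: all the real work resides in the preceding theorems, and the only points requiring care are to treat the two rationality types explicitly and to observe that the fixed-price mechanism embeds into the fixed-price signaling mechanism as the uninformative special case, so that ``not more revenue'' is the correct and only possible direction of the comparison.
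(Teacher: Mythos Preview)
Your proposal is correct and matches the paper's own justification: the paper simply states that the proposition follows by combining Proposition~\ref{prop: ex-post IR} (ex-post case) with the immediately preceding Observation (ex-interim case), which is exactly your case split. Your additional remark that the fixed-price mechanism embeds as the uninformative signaling scheme is a harmless extra that the paper leaves implicit.
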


\subsection{Multiple buyers in the Market}
In this section, we consider the original model as described in Section \ref{sec:model}. In this model, there are $n$ buyers in the market and the seller has just one item for sale.

In this scenario, the signal set $\Sigma$ should be $n$-dimension. The seller can send different signals to different buyers, resulting in different posterior beliefs. The signal space will be extremely large, but with the help of the revelation principle, we can reduce the signal space.

\begin{lemma}[\citet{bergemann2018design}]
It is without loss of generality to focus on the responsive experiment where the signal space has at most the cardinality of the outcome space.
\end{lemma}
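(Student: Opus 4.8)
The statement to prove is the revelation-principle-style lemma attributed to \citet{bergemann2018design}: it suffices to restrict attention to "responsive" experiments whose signal space has cardinality at most that of the outcome space. The plan is to adapt the classical argument that, in a Bayesian persuasion / information design setting, any signaling scheme can be replaced by a canonical one in which signals are interpreted directly as outcome recommendations. In our multi-buyer fixed-price setting, the relevant ``outcomes'' are the joint decisions of which (if any) buyer obtains the item; equivalently, one can think of a recommendation profile telling each buyer whether to buy. Since the seller ultimately sells to at most one buyer and ties are broken by the seller, the effective outcome space is $\{$no sale$\}\cup\{$sold to $i : i\in N\}$, which has cardinality $n+1$, finite.

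First I would fix an arbitrary fixed-price signaling mechanism $(\pi,p)$ with an arbitrary (possibly large, possibly continuous) signal space $\Sigma$, and consider the induced distribution over (quality, signal-profile) pairs together with each buyer's Bayes-rational best response to their marginal signal. For each realized signal profile, each buyer $i$ either strictly prefers to buy, strictly prefers not to, or is indifferent; break indifference in the seller's favor (buy), which is without loss under the usual tie-breaking convention stated with the mechanism. This induces, for each signal profile, a deterministic purchase-decision profile, hence an outcome in the $(n+1)$-element set above. Second, I would define the canonical scheme $\pi'$ by composing $\pi$ with this ``outcome map'': for each quality $q$, $\pi'(q,\cdot)$ is the pushforward of $\pi(q,\cdot)$ under the map sending a signal profile to its induced outcome. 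The key claim is that $\pi'$ is obedient — i.e., each buyer, upon receiving the recommendation ``buy'' (resp. ``don't buy''), still finds it optimal to comply — and that it generates exactly the same revenue as $(\pi,p)$, because the probability the item is sold (and the price collected) depends on $(\pi,p)$ only through the induced outcome distribution, which is preserved by construction. Obedience follows from a standard garbling/sufficiency argument: the recommendation to buyer $i$ is a coarsening of buyer $i$'s original signal, and since buyer $i$ was best-responding to the finer signal, aggregating posteriors over all original signals that map to the same recommendation preserves the sign of the expected-utility-minus-price comparison (one invokes that the recommendation's posterior is a convex combination of posteriors all lying on the same side of the buying threshold, or, for the ex-post case, that the recommendation ``buy'' is only ever issued on events where $v_i(q)\ge p$ almost surely).

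The main obstacle — and the place where the proof needs genuine care rather than boilerplate — is handling the interaction between obedience and the at-most-one-winner feasibility constraint when multiple buyers are simultaneously recommended ``buy.'' In plain Bayesian persuasion each receiver's recommendation can be analyzed in isolation, but here the seller's revenue and each buyer's effective allocation probability depend on how ties among ``buy''-recommended buyers are resolved. I would argue that this is not actually a problem for the reduction: the tie-breaking rule is a fixed, exogenous part of the mechanism, so the outcome map is well-defined, and a buyer's obedience condition only compares ``buy'' versus ``don't buy'' given their own recommendation and the induced posterior over $q$ — it does not reference the tie-break, because whether buyer $i$ actually receives the item conditional on saying ``buy'' only scales their expected surplus by a positive constant (the conditional win probability), which does not flip the inequality. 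Once this is pinned down, the cardinality bound is immediate: $\pi'$ only ever emits signal profiles from the finite outcome set, so $|\Sigma'|\le n+1$, and ``responsive'' (each signal used with positive probability on a set where the induced response is the recommended one) holds by construction after discarding never-used or non-obedient labels. I would close by noting that the reduction is revenue-preserving in both directions — any canonical responsive scheme is itself a feasible $(\pi,p)$ — so optimizing over canonical schemes is without loss, which is exactly the claim.
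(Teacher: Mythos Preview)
The paper does not prove this lemma; it is quoted as a known result from \citet{bergemann2018design} and applied directly to restrict the signal space to $n+1$ elements. So there is no ``paper's proof'' to compare against---you are supplying an argument where the paper simply cites one.

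Your revelation-principle sketch is the right template, but the specific reduction you propose---pushing the original scheme forward along the ``outcome map'' that records the realized allocation (including tie-breaking)---does not preserve obedience in general, and the two sentences you use to justify it are both wrong. First, ``the recommendation to buyer $i$ is a coarsening of buyer $i$'s original signal'' is false: the realized allocation depends on every buyer's decision and on the tie-break, so buyer $i$'s new recommendation is a function of the full signal profile, not of $\sigma_i$ alone. Second, the claim that the conditional win probability ``only scales the expected surplus by a positive constant'' is false because the event ``buyer $i$ wins the tie-break'' is correlated with $q$ through the other buyers' signals. Concretely: take two ex-interim buyers with $v_1(q)=v_2(q)=q$ on $[0,1]$, price $p=3/4$, and an original scheme in which buyer~$1$ is fully informed while buyer~$2$ learns only whether $q\ge 1/2$ (and buys on that event, being exactly indifferent). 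With a uniform tie-break, your canonical scheme recommends ``buy'' to buyer~$2$ on $\{1/2\le q<3/4\}$ and on half of $\{q\ge 3/4\}$; the induced posterior mean of $q$ for buyer~$2$ is $17/24<3/4$, so buyer~$2$ strictly prefers not to buy and obedience fails. The standard Bergemann--Morris argument maps to \emph{action profiles} (here $\{0,1\}^n$), giving each buyer a binary recommendation that genuinely is a coarsening of that buyer's own signal; the further collapse from $2^n$ joint signals to the $n+1$ allocation signals the paper uses is a separate step in which the seller must re-select which single buyer to recommend, and that selection need not coincide with the original tie-break. Your proof does not supply that step.
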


Based on the above results, we can focus on the set of signaling schemes where a one-to-one correspondence exists between signals and outcomes. In our setting, there are $n+1$ possible outcomes, with $n$ of them corresponding to each buyer obtaining the item and an additional one corresponding to no buyer buying the item. Thus, we can define $\Sigma$ as follows:
\begin{align*}
    \Sigma =  \left\{ \bm{\sigma } \in \{0, 1\}^n : \sum_{i=1}^n \sigma_i \le 1   \right\},
\end{align*}
where $\bm{\sigma}$ with $\sigma_i = 1$ corresponds to the outcome where buyer $i$ obtains the item, and with $\sigma_i=0$ for all $i$ corresponds to the outcome where no buyer makes a purchase. From an implementation perspective, the seller can send the $i$-th element of $\bm{\sigma }$ to the buyer $i$, indicating whether the buyer should make a purchase or not. For simplicity, we denote the signal with $\sigma_i=1$ as $s_i$, and the signal with $\sigma_i=0$ for all $i$ as $s_0$. Therefore, sending signal $s_i$ means the seller asks buyer $i$ to buy the item.

There is only one item for sale, so $\pi$ should satisfy the following constraints:
\begin{align}
\label{eq:prob constraint}
    \sum_{i \in N} \pi(q, s_i) + \pi(q, s_0) = 1 \quad \text{and} \quad \pi(q, s_i) \ge 0, \forall i,  \forall  q.
\end{align}
Upon receiving signal $1$, buyer $i$ will update belief over $q$ as follows:
\begin{align}
\label{eq:belief update multi 1}
    g(q|1) = \frac{\pi(q, s_i) \cdot g(q)}{\int_{q'\in Q} \pi(q', s_i) \cdot g(q') \,\dd q' }.
\end{align}
Similarly, upon receiving signal $0$, the posterior belief of buyer $i$ is:
\begin{align}
    g(q|0) = \frac{  [1-\pi(q, s_i)] \cdot g(q)  }{ \int_{q'\in Q} [1- \pi(q', s_i)] \cdot g(q') \,\dd q' }.
\end{align}

Whether there are multiple buyers or a single buyer, the seller's revenue is equal to the probability of the item being sold multiplied by the fixed price, as shown in Equation \eqref{eq:rev sig}.

\subsubsection{Ex-post rational buyers} We first discuss what constraints an obedience mechanism should satisfy when there are multiple ex-post rational buyers. 

Similar to the analysis in the single-buyer case, upon receiving signal 1, the probability that buyer $i$'s valuation is less than $p$ is:
\begin{align*}
    Pr \{ v_i(q) < p \mid 1 \} = \frac{\int_{q_1}^{v_i^{-1}(p)} \pi(q, s_i) \cdot g(q) \,\dd q}{ \int_{q'\in Q} \pi(q', s_i) \cdot g(q') \,\dd q'  }.
\end{align*}
To ensure buyer $i$ will buy the item after receiving signal 1, we need to ensure that $Pr\{ v_i(q) < p \mid 1 \} = 0$, that is:
\begin{align*}
    \int_{q_1}^{v_i^{-1}(p)} \pi(q, s_i) g(q) \,\dd q = 0.
\end{align*}
Similarly, after receiving signal 0, the probability that buyer $i$'s valuation is less than $p$ is:
\begin{align*}
    Pr \{ v_i(q) < p \mid 0 \} &= \int_{q_1}^{v_i^{-1}(p)} g(q|0) \,\dd q\\
    &= \frac{\int_{q_1}^{v_i^{-1}(p)} [1- \pi(q, s_i)] \cdot g(q) \,\dd q }{\int_{q'\in Q} [1- \pi(q', s_i)] \cdot g(q') \,\dd q'  }.
\end{align*}
Then to ensure obedience, it is equivalent to $Pr\{v_i(q) < p \mid 0\} = 1$, that is:
\begin{align*}
    \int_{q_1}^{v_i^{-1}(p)} [1- \pi(q, s_i)] g(q) \,\dd q = \int_{q\in Q} [1- \pi(q, s_i)] g(q) \,\dd q,
\end{align*}
which can also be rewritten as:
\begin{align*}
    \int_{v_i^{-1}(p)}^{q_2} [1- \pi(q, s_i)] g(q) \,\dd q = 0.
\end{align*}
\begin{proposition}
When facing multiple ex-post rational buyers, a mechanism $(\pi, p)$ is obedience if and only if:
\begin{align}
    &\int_{q_1}^{v_i^{-1}(p)} \pi(q, s_i) g(q) \,\dd q = 0, \forall i\in N. \label{eq:ob post multiple 1} \\
    &\int_{v_i^{-1}(p)}^{q_2} [1- \pi(q, s_i)] g(q) \,\dd q = 0, \forall i\in N. \label{eq: ob post multiple 0}
\end{align}
\end{proposition}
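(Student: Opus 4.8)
The proof mirrors the single-buyer derivation of \eqref{eq:ob signal 1} and \eqref{eq:ob signal 0}, the only new ingredient being that buyer $i$'s incentives are governed solely by the marginal probability with which she is recommended to buy. I would first observe that in the reduced signal space $\Sigma=\{\bm{\sigma}\in\{0,1\}^n:\sum_j\sigma_j\le 1\}$ the unique signal with $\sigma_i=1$ is $s_i$; hence, conditional on quality $q$, buyer $i$ is told ``buy'' with probability $\pi(q,s_i)$ and ``do not buy'' with probability $1-\pi(q,s_i)$, and her two possible posteriors are exactly the densities $g(\cdot\mid 1)$ and $g(\cdot\mid 0)$ displayed in \eqref{eq:belief update multi 1} and the line following it. In particular, buyer $i$'s obedience is independent of the recommendations issued to the other buyers, so it suffices to characterize obedience for a single fixed $i$ and then take the conjunction over $i\in N$.

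The core step is to restate ex-post obedience for buyer $i$ as a support condition on these two posteriors. Since buyer $i$ is ex-post rational, following the recommendation ``buy'' is (weakly) optimal only if $v_i(q)-p\ge 0$ in every state she deems possible after the recommendation, i.e.\ only if the posterior $g(\cdot\mid 1)$ assigns zero probability to $\{q:v_i(q)<p\}$; symmetrically, following ``do not buy'' is optimal only if $g(\cdot\mid 0)$ assigns zero probability to $\{q:v_i(q)>p\}$. Using that $v_i$ is increasing with inverse $v_i^{-1}$, these ``bad'' sets are the intervals $[q_1,v_i^{-1}(p))$ and $(v_i^{-1}(p),q_2]$ respectively (the single point $v_i^{-1}(p)$ is $g$-null because $G$ is differentiable, so it is immaterial whether the endpoint is included). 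This is the one place the behavioral assumption enters, so I would state it carefully; the rest is mechanical.

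Finally I would substitute the explicit posteriors. The mass $g(\cdot\mid1)$ places on $[q_1,v_i^{-1}(p))$ equals $\bigl(\int_{q_1}^{v_i^{-1}(p)}\pi(q,s_i)g(q)\,\dd q\bigr)\big/\bigl(\int_Q\pi(q,s_i)g(q)\,\dd q\bigr)$, which is zero exactly when the numerator is zero, yielding \eqref{eq:ob post multiple 1}; replacing $\pi(q,s_i)$ by $1-\pi(q,s_i)$ and the interval by $(v_i^{-1}(p),q_2]$ yields \eqref{eq: ob post multiple 0} in the same way. Since every implication above is an equivalence, both directions of the ``if and only if'' follow simultaneously. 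The only loose end, which I would dispatch in one sentence, is the degenerate case where buyer $i$ receives ``buy'' (resp.\ ``do not buy'') with overall probability zero: the corresponding posterior is then never realized, so there is nothing to require, and the matching integral in \eqref{eq:ob post multiple 1} (resp.\ \eqref{eq: ob post multiple 0}) is automatically $0$, so the equivalence still holds. I do not foresee a real obstacle --- the argument is essentially the single-buyer one plus the marginalization remark in the first paragraph.
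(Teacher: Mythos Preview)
Your proposal is correct and follows essentially the same route as the paper: compute buyer $i$'s posterior after each of the two recommendations using the marginal $\pi(q,s_i)$, then translate ex-post obedience into the requirement that the posterior after ``buy'' (resp.\ ``do not buy'') place zero mass on $\{q:v_i(q)<p\}$ (resp.\ $\{q:v_i(q)>p\}$), which reduces to the two integral conditions. You are in fact somewhat more careful than the paper in spelling out the marginalization remark, the bidirectional ``if and only if,'' and the degenerate zero-probability case, but the underlying argument is the same.
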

We find that constraints \eqref{eq:ob post multiple 1} and \eqref{eq: ob post multiple 0} cannot be satisfied simultaneously. In other words, there is no obedient mechanism.
\begin{theorem}
When multiple ex-post rational buyers are in the market, there is no fixed-price signaling mechanism that is obedience for all the buyers.
\end{theorem}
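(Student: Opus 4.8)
The plan is to argue by contradiction, and it turns out that only the obedience constraint for signal $0$ — constraint \eqref{eq: ob post multiple 0} — together with the feasibility constraint \eqref{eq:prob constraint} is needed. Suppose, towards a contradiction, that some $(\pi, p)$ is obedient for every buyer. Then by the obedience characterization proved just above, \eqref{eq: ob post multiple 0} holds for all $i \in N$. Because $\pi(q, s_i) \in [0,1]$ and $g(q) \ge 0$, the integrand $[1 - \pi(q, s_i)] g(q)$ is nonnegative, so the integral in \eqref{eq: ob post multiple 0} can vanish only if $\pi(q, s_i) = 1$ for almost every $q \in (v_i^{-1}(p), q_2]$ (formally, at every such $q$ with $g(q) > 0$). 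In words: whenever the realized quality is high enough that buyer $i$'s ex-post value strictly exceeds $p$, obedience forces the seller to recommend buyer $i$ to purchase.

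First I would invoke that there are $n \ge 2$ buyers. Pick two of them, say buyers $1$ and $2$, and let $\hat q = \max\{v_1^{-1}(p), v_2^{-1}(p)\}$. For every $q \in (\hat q, q_2]$ we simultaneously have $q > v_1^{-1}(p)$ and $q > v_2^{-1}(p)$, so the remark of the previous paragraph gives $\pi(q, s_1) = \pi(q, s_2) = 1$ for almost all such $q$. But then, for such $q$,
\begin{align*}
    \sum_{i \in N} \pi(q, s_i) + \pi(q, s_0) \;\ge\; \pi(q, s_1) + \pi(q, s_2) \;=\; 2 \;>\; 1,
\end{align*}
contradicting \eqref{eq:prob constraint} (where every $\pi(q, s_i) \ge 0$ and $\pi(q, s_0) \ge 0$). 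Since the offending set of qualities $\{q : \hat q < q \le q_2\}$ carries positive probability mass, no obedient $(\pi, p)$ can exist.

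The one place that needs care — and the main obstacle — is the degenerate regime in which $(\hat q, q_2]$ carries no mass, i.e.\ the price $p$ is so high that at most one buyer can ever value the item above $p$ (so $v_i^{-1}(p) = q_2$ for all but one buyer). There constraint \eqref{eq:ob post multiple 1} forces $\pi(\cdot, s_i) \equiv 0$ for every such ``inactive'' buyer and the instance reduces to the single-buyer problem analyzed earlier in Section~\ref{sec:fixed-price signaling}, where an obedient mechanism does exist. Hence the statement should be understood for the non-degenerate case — at least two buyers value the item above $p$ with positive probability, equivalently $p < v_i(q_2)$ for at least two $i$ (any higher price shuts out all but one buyer and is uninteresting anyway). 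I would make this non-degeneracy hypothesis explicit at the outset and then run the contradiction above.
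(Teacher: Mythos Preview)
Your argument is correct and is essentially the paper's own proof, just stated with more care. Both proofs extract from the signal-$0$ obedience constraint \eqref{eq: ob post multiple 0} that $\pi(q,s_i)=1$ on $[v_i^{-1}(p),q_2]$ for every $i$, and then observe that on the common high-quality region this forces $\sum_i \pi(q,s_i)\ge 2$, violating the feasibility constraint \eqref{eq:prob constraint}. Two small differences worth noting: the paper also writes down the consequence of constraint \eqref{eq:ob post multiple 1} before reaching the contradiction, but as you point out, only \eqref{eq: ob post multiple 0} is actually needed; and you phrase the conclusion almost-everywhere rather than pointwise, which is the honest statement.

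Your discussion of the degenerate regime --- where $p$ is so large that $v_i^{-1}(p)\ge q_2$ for all but one buyer, so the overlap interval carries no mass --- is a genuine observation that the paper glosses over. In that corner case the problem collapses to the single-buyer setting and an obedient mechanism does exist, so the theorem as literally stated requires exactly the non-degeneracy hypothesis you propose (at least two buyers satisfy $p<v_i(q_2)$). Making that hypothesis explicit, as you suggest, is the right fix.
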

\begin{proof}
According to the obedience constraint \eqref{eq:ob post multiple 1}, we have:
\begin{align*}
    \pi(q, s_i) = 0 \text{ if } q \in [q_1, v_i^{-1}(p) ), \forall i\in N .
\end{align*}
To satisfy constraint \eqref{eq: ob post multiple 0}, we have:
\begin{align*}
    \pi(q, s_i) = 1 \text{ if } q\in [ v_i^{-1}(p), q_2], \forall i\in N.
\end{align*}
Overall, we obtain the following signaling scheme:
\begin{align*}
    \pi(q, s_i) = \begin{cases}
        1 & \text{ if } i=0 \text{ and } q \in [q_1, v_{min}^{-1}(p) )\\
        0 & \text{ if } i\in N \text{ and } q \in [q_1, v_{min}^{-1}(p) )\\
        0 & \text{ if } i=0 \text{ and } q \in [ v_{min}^{-1}(p), q_2]\\
        1 & \text{ if } i \in N \text{ and } q \in [ v_{i}^{-1}(p), q_2]
    \end{cases},
\end{align*}
where $v_{min}^{-1}(p) = \min_{i\in N} v_{i}^{-1}(p)$.

Note that in the above signaling scheme, the seller sends $s_i$ with probability 1 for all $i\in N$ when $q\in [ v_{i}^{-1}(p), q_2] $, which contradicts the probability constraint \eqref{eq:prob constraint}. Therefore, there is no obedient fixed-price signaling mechanism.

\end{proof}

However, if we assume that buyers cannot purchase the item when the seller does not recommend it, that is, we omit constraint \eqref{eq: ob post multiple 0}, and then we can obtain the following results.

\begin{theorem}
If buyers are only allowed to buy the item after receiving signal 1, the following mechanism $(\pi^*, p^*)$ is one of the optimal ones:
\begin{align*}
    \pi^*(q, s_i) = \begin{cases}
        1 & \text{ if } i=0 \text{ and } q \in [q_1, v_{min}^{-1}(p) )\\
        0 & \text{ if } i\in N \text{ and } q \in [q_1, v_{min}^{-1}(p) )\\
        0 & \text{ if } i=0 \text{ and } q \in [v_{min}^{-1}(p), q_2 ] \\
        1 & \text{ if } i=j \text{ and } q\in [v_{min}^{-1}(p), q_2]
    \end{cases},
\end{align*}
where $j = \argmin_{i} v_{i}^{-1}(p)$.
\begin{align*}
    p^* \in \argmax_{p} [1 - G(v_{min}^{-1}(p) ) ] \cdot p.
\end{align*}

\end{theorem}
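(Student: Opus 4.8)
The plan is to decouple the seller's problem: for each fixed price $p$, first pin down the largest sale probability attainable by an obedient mechanism, and then optimize the resulting expression over $p$. After omitting constraint \eqref{eq: ob post multiple 0}, the only obedience requirement is \eqref{eq:ob post multiple 1}. Since $\pi(q,s_i)\ge 0$ and $g(q)\ge 0$ on $Q$, that constraint forces $\pi(q,s_i)=0$ for $g$-almost every $q\in[q_1,v_i^{-1}(p))$; equivalently, the seller may recommend buyer $i$ to purchase only at qualities with $v_i(q)\ge p$. Using monotonicity of each $v_i$, at any quality $q<v_{min}^{-1}(p)=\min_{i\in N}v_i^{-1}(p)$ we have $v_i(q)<p$ for every buyer, so no buyer can be recommended, i.e.\ $\pi(q,s_0)=1$ and the item goes unsold there.

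Next I would turn this into an upper bound. Writing the sale probability as $\int_Q\big(1-\pi(q,s_0)\big)g(q)\,\dd q$ and using $1-\pi(q,s_0)=0$ on $[q_1,v_{min}^{-1}(p))$ together with $1-\pi(q,s_0)\le 1$ on $[v_{min}^{-1}(p),q_2]$, every obedient mechanism at price $p$ satisfies
\begin{align*}
    Rev_{sig}(\pi,p)\le [1-G(v_{min}^{-1}(p))]\cdot p .
\end{align*}
Since $p$ ranges freely, the revenue of any obedient mechanism is at most $\max_p\,[1-G(v_{min}^{-1}(p))]\cdot p$.

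Then I would verify that $(\pi^*,p^*)$ is feasible and meets this bound with equality. Feasibility: the probability constraint \eqref{eq:prob constraint} holds because at each $q$ exactly one coordinate of $\pi^*(q,\cdot)$ equals $1$, namely $\pi^*(q,s_0)$ for $q<v_{min}^{-1}(p)$ and $\pi^*(q,s_j)$ for $q\ge v_{min}^{-1}(p)$ with $j=\argmin_{i}v_i^{-1}(p)$; constraint \eqref{eq:ob post multiple 1} holds trivially for $i\neq j$ since $\pi^*(\cdot,s_i)\equiv 0$, and for $i=j$ because the support of $\pi^*(\cdot,s_j)$ is $[v_{min}^{-1}(p),q_2]=[v_j^{-1}(p),q_2]$, so $\int_{q_1}^{v_j^{-1}(p)}\pi^*(q,s_j)g(q)\,\dd q=0$. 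Moreover, conditioned on signal $s_j$ the posterior of buyer $j$ is supported on $\{q:v_j(q)\ge p\}$, so an ex-post rational buyer $j$ indeed buys; hence the sale probability equals $\Pr\{q\ge v_{min}^{-1}(p)\}=1-G(v_{min}^{-1}(p))$ and $Rev_{sig}(\pi^*,p)=[1-G(v_{min}^{-1}(p))]\cdot p$. Consequently the seller's problem collapses to $\max_p\,[1-G(v_{min}^{-1}(p))]\cdot p$, and pairing any maximizer $p^*$ with the scheme $\pi^*$ built at $p=p^*$ gives an optimal fixed-price signaling mechanism.

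The only delicate point — and the one I would be most careful about — is the passage from the integral equation \eqref{eq:ob post multiple 1} to the pointwise conclusion $\pi(q,s_i)=0$ on $[q_1,v_i^{-1}(p))$: this holds only $g$-almost everywhere, so the deductions should be phrased modulo $g$-null sets, and one should note that this affects neither the upper-bound integral nor the value attained by $\pi^*$. A secondary, routine point is the existence of a maximizer of $[1-G(v_{min}^{-1}(p))]\cdot p$, which can be secured by restricting $p$ to a compact interval (prices so high that no buyer is ever obediently recommended yield revenue $0$) and invoking continuity of $G$; the statement implicitly assumes this.
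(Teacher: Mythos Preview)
Your proposal is correct and follows essentially the same approach as the paper: deduce from constraint \eqref{eq:ob post multiple 1} that $\pi(q,s_i)$ must vanish on $[q_1, v_i^{-1}(p))$, construct the explicit scheme $\pi^*$ that recommends buyer $j=\argmin_i v_i^{-1}(p)$ on $[v_{min}^{-1}(p),q_2]$, compute its revenue, and optimize over $p$. Your version is in fact somewhat more complete, since you explicitly supply the upper-bound argument (that no obedient scheme at price $p$ can sell with probability exceeding $1-G(v_{min}^{-1}(p))$), whereas the paper's proof simply exhibits $\pi^*$ and its revenue without separately arguing that nothing better is possible at a given $p$.
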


\begin{proof}
Now the obedience mechanism is fully characterized by constraint \eqref{eq:ob post multiple 1}. To satisfy this constraint, we construct the following signaling scheme:
\begin{align*}
    \pi(q, s_i) = \begin{cases}
        1 & \text{ if } i=0 \text{ and } q \in [q_1, v_{min}^{-1}(p) )\\
        0 & \text{ if } i\in N \text{ and } q \in [q_1, v_{min}^{-1}(p) )\\
        0 & \text{ if } i=0 \text{ and } q \in [v_{min}^{-1}(p), q_2 ] \\
        1 & \text{ if } i=j \text{ and } q\in [v_{min}^{-1}(p), v_{max}^{-1}(p)]
    \end{cases},
\end{align*}
where $v_{max}^{-1}(p) = \max_{i\in N} v_{i}^{-1}(p)$.

When $q\in [v_{max}^{-1}(p), q_2]$, all buyers will be willing to buy the item after receiving signal 1. Numerous obedient signaling schemes can be constructed as long as the following constraints are satisfied:
\begin{align*}
    \sum_{i\in N} \pi(q, s_i) = 1.
\end{align*}
For simplicity, we set $\pi(q, s_i) = 1$ if $i = j$ and $q \in [v_{max}^{-1}(p), q_2] $.

Under signaling scheme $\pi$, the seller's revenue is:
\begin{align*}
    Rev_{sig}(\pi, p) &= Pr\{ \sum_{i\in N} \pi(q, s_i) \} \cdot p\\
    &= Pr \{ q \in [v_{min}^{-1}(p), q_2]  \} \cdot p\\
    &= [1 - G(v_{min}^{-1}(p) ) ] \cdot p.
\end{align*}
Thus, the optimal fixed price can be obtained as follows:
\begin{align*}
    p^* \in \argmax_{p} [1 - G(v_{min}^{-1}(p) ) ] \cdot p.
\end{align*}
This concludes the proof.
\end{proof}

\subsubsection{Ex-interim rational buyers}
Next, we discuss what constraints should an obedient mechanism satisfy when facing multiple ex-interim IR buyers.

We need to ensure that the ex-interim utility of buyer $i$ from buying the item is no less than 0 when receiving signal 1. Combine Equation \eqref{eq:ob interim original} and \eqref{eq:belief update multi 1}, we get:
\begin{align}
\label{eq:ob interim multi 1}
    \int_{q\in Q} \pi(q, s_i) [v_i(q) -p] g(q) \,\dd q \ge 0, \forall i\in N.
\end{align}
Similarly, to ensure buyer $i$ will not buy the item after receiving signal 0, we have:
\begin{align}
\label{eq:ob interim multi 0}
    \int_{q\in Q} \pi(q, s_i) [v_i(q) - p] g(q) \,\dd q \ge \E_{q\sim g(q)} [v_i(q)] - p, \forall i\in N.
\end{align}

Combine with the seller's objective, we can formulate the optimal mechanism design problem as the following program:
\begin{maxi}
{\pi, p}
{ \int_{q\in Q} \sum_{i\in N} \pi(q, s_i) g(q)\,\dd q \cdot p }
{\label{eq:LP multi}}
{}
\addConstraint{ \int_{q\in Q} \pi(q, s_i) [v_i(q) -p] g(q) \,\dd q }{\ge 0 ,}{\forall i\in N}
\addConstraint{ \int_{q\in Q} \pi(q, s_i) [v_i(q) - p] g(q) \,\dd q }{\ge \E_{q} [v_i(q)] - p, }{\forall i\in N}
%\addConstraint{0 \le \pi(q) \le 1}{}{\forall q\in Q}
\end{maxi}

\begin{table}[htbp]
\centering
\caption{A summary of the optimal revenue obtained from the fixed-price mechanism and the fixed-price signaling mechanism under different buyers' behavior patterns}
\label{tab:summary}
\begin{tabular}{c|c|c}
\toprule
\textbf{Behavior pattern} &  Ex-post rational  & Ex-interim rational   \\
\midrule
\textbf{Fixed-price mechanism} & $\max_p [1- \prod_{i\in N} G(v_i^{-1}(p))] \cdot p$ & $\max_i \E[v_i(q)]$ \\
\midrule
\textbf{Fixed-price signaling mechanism(one buyer)} & $\max_p [1-G(v^{-1}(p))] \cdot p$ & $ \E [v(q)] $ \\
\midrule
\textbf{Fixed-price signaling mechanism } & $-$ , $\max_p [1 - G(v_{min}^{-1}(p) ) ] \cdot p$  & $\E [ \max_i v_i(q)  ]$  \\
\bottomrule
\end{tabular}
% \begin{tabular}{c|c|c|c}
% \toprule
% \textbf{Behavior pattern} &  \textbf{Fixed-price mechanism}  & \textbf{Fixed-price signaling with one buyer} & \textbf{Fixed-price signaling }    \\
% \midrule
% Ex-post rational & $\max_p [1- \prod_{i\in N} G(v_i^{-1}(p))] \cdot p$ & $\max_p [1-G(v^{-1}(p))] \cdot p$ & $-$ , $\max_p [1 - G(v_{min}^{-1}(p) ) ] \cdot p$  \\
% \midrule
% Ex-interim rational & $\max_i \E[v_i(q)]$ & $ \E [v(q)] $  & $\E [ \max_i v_i(q)  ]$ \\
% \bottomrule
% \end{tabular}
\end{table}

We can obtain an upper bound based on constraint \eqref{eq:ob interim multi 1}.
\begin{proposition}
The program \eqref{eq:LP multi} is upper bounded by:
\begin{align*}
    Rev_{sig}(\pi, p) \le \E_{q} [v_{max}(q)],
\end{align*}
where $v_{max}(q) = max_{i} v_i(q)$.
\end{proposition}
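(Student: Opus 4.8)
The plan is to mimic the single-buyer upper-bound argument (the proposition preceding Theorem~\ref{the:post one}), but now aggregated across the $n$ signals $s_1,\dots,s_n$. The starting point is the first family of obedience constraints \eqref{eq:ob interim multi 1}, one for each $i\in N$. First I would sum these $n$ inequalities over $i$, which yields
\begin{align*}
    \int_{q\in Q} \sum_{i\in N} \pi(q, s_i)\, v_i(q)\, g(q)\,\dd q \;\ge\; p \int_{q\in Q} \sum_{i\in N} \pi(q, s_i)\, g(q)\,\dd q \;=\; Rev_{sig}(\pi, p).
\end{align*}
So it suffices to bound the left-hand side by $\E_q[v_{max}(q)]$.

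The second step is a pointwise (in $q$) estimate of the integrand. Since $v_i(q) \le v_{max}(q)$ for every $i$ and $\pi(q, s_i) \ge 0$, we have $\sum_{i\in N} \pi(q, s_i)\, v_i(q) \le v_{max}(q) \sum_{i\in N} \pi(q, s_i)$. The feasibility constraint \eqref{eq:prob constraint} then gives $\sum_{i\in N} \pi(q, s_i) = 1 - \pi(q, s_0) \le 1$, so the integrand is at most $v_{max}(q)\, g(q)$ for every $q$. Integrating over $Q$ yields $Rev_{sig}(\pi, p) \le \int_Q v_{max}(q)\, g(q)\,\dd q = \E_q[v_{max}(q)]$, which is the claimed bound. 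Note this uses only constraint \eqref{eq:ob interim multi 1} and the probability simplex constraint, so the second obedience family \eqref{eq:ob interim multi 0} plays no role here.

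There is no real obstacle: the argument is a one-line aggregation plus a pointwise bound, and both steps are routine once the summation-over-$i$ idea is in place. The only thing to be slightly careful about is the interchange of sum and integral (immediate, since $N$ is finite) and the sign of $p$ when moving it outside the integral (it is nonnegative, being a price, and in any case the revenue objective is only interesting for $p \ge 0$). The more substantive content — showing this bound is in fact \emph{achievable}, and hence that the signaling mechanism strictly beats the fixed-price benchmark $\max_i \E[v_i(q)]$ — is deferred to the construction that follows this proposition.
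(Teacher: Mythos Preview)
Your proposal is correct and follows essentially the same argument as the paper: sum the obedience constraints \eqref{eq:ob interim multi 1} over $i\in N$ to bound $Rev_{sig}(\pi,p)$ by $\int_Q \sum_{i} \pi(q,s_i)\,v_i(q)\,g(q)\,\dd q$, then use $v_i(q)\le v_{max}(q)$ together with $\sum_i \pi(q,s_i)\le 1$ to obtain $\E_q[v_{max}(q)]$. Your write-up is in fact slightly more explicit than the paper's in justifying the second inequality via the simplex constraint \eqref{eq:prob constraint}.
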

\begin{proof}
According to constraint \eqref{eq:ob interim multi 1}, we have:
\begin{align*}
    \int_{q\in Q} \pi(q, s_i) v_i(q) g(q) \,\dd q \ge \int_{q\in Q} \pi(q, s_i) g(q)\,\dd q \cdot p, \forall i \in N.
\end{align*}
Sum over all constraints, we obtain:
\begin{align*}
    \int_{q\in Q} \sum_{i\in N} \pi(q, s_i) g(q) \,\dd q * p &\le \int_{q\in Q} \sum_{i\in N} \pi(q, s_i) v_i(q) g(q) \,\dd q \\
    &\le \int_{q\in Q}  v_{max}(q) g(q) \,\dd q = \E_{q} [v_{max}(q) ],
\end{align*}
This concludes the proof. 
\end{proof}

Next, we demonstrate that we can construct an obedience mechanism that achieves this upper bound, thereby achieving optimality.
\begin{theorem}
When there are multiple ex-post rational buyers in the market, the following signaling $\pi^*$ and fixed price $p^*$ forms an optimal mechanism:
\begin{align*}
    \pi^*(q, s_i) = \begin{cases}
        1 & \text{ if } i \in \argmax_i v_i(q)\\
        0 & \text{ otherwise }
    \end{cases}.
\end{align*}
\begin{align*}
    p^* = \E_{q} [v_{max}(q)].
\end{align*}
\end{theorem}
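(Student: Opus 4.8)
The plan is to prove optimality of the stated pair $(\pi^*,p^*)$ in two moves against the program \eqref{eq:LP multi}. First I would establish that $(\pi^*,p^*)$ is \emph{feasible}, i.e.\ it satisfies the obedience constraints \eqref{eq:ob interim multi 1} and \eqref{eq:ob interim multi 0} together with the probability constraint \eqref{eq:prob constraint}. Second, I would show its objective value equals the upper bound $\E_q[v_{max}(q)]$ from the preceding Proposition. Feasibility plus tightness of the bound then gives optimality immediately.

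The objective computation is the routine half. For every $q$ the set $\argmax_i v_i(q)$ is nonempty, and after breaking ties on the (measure-zero) set where it is not a singleton, $\pi^*$ recommends exactly one buyer at each $q$, so $\sum_{i\in N}\pi^*(q,s_i)=1$ almost everywhere and \eqref{eq:prob constraint} holds. Consequently the item is sold with probability one and the revenue is $\int_{q\in Q}\sum_{i\in N}\pi^*(q,s_i)g(q)\,\dd q\cdot p^*=p^*=\E_q[v_{max}(q)]$, which already matches the bound. It is worth recording \emph{why} this is the right construction: since $\pi^*$ only ever recommends an argmax buyer, $\sum_{i\in N}\pi^*(q,s_i)v_i(q)=v_{max}(q)$, which is exactly the equality needed to make the inequality $\sum_i\pi(q,s_i)v_i(q)\le v_{max}(q)$ used in the bound's proof tight.

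The substantive half, and the step I expect to be the main obstacle, is verifying the interim obedience constraint \eqref{eq:ob interim multi 1} for every buyer at the single price $p^*$. Writing $R_i=\{q:\,i\in\argmax_j v_j(q)\}$ for the region on which $i$ is recommended and $A_i=\int_{q\in Q}\pi^*(q,s_i)[v_i(q)-p^*]g(q)\,\dd q=\int_{R_i}\bigl(v_{max}(q)-p^*\bigr)g(q)\,\dd q$ (using $v_i=v_{max}$ on $R_i$), the constraint \eqref{eq:ob interim multi 1} becomes
\[
A_i\ \ge\ 0,
\qquad\text{equivalently}\qquad
\E_q\!\left[v_{max}(q)\mid q\in R_i\right]\ \ge\ p^*=\E_q[v_{max}(q)].
\]
This demands that each buyer's win-region carry an above-average conditional mean of $v_{max}$, which is precisely where a careful argument is required and where I expect the difficulty to concentrate. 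Note that the sets $\{R_i\}$ partition $Q$, so by the law of total expectation $\sum_i\Pr[q\in R_i]\,\E_q[v_{max}\mid q\in R_i]=\E_q[v_{max}]$; hence the inequalities $A_i\ge 0$ can hold simultaneously only if each binds ($A_i=0$), a rigid condition that is violated as soon as some buyer is the argmax only on low-quality states. Handling this is the crux: one must either exploit freedom in splitting recommendations among tied argmax buyers (available only where ties carry positive mass) or impose additional structure on the valuation functions $v_i$. The companion constraint \eqref{eq:ob interim multi 0} is by contrast automatic, since its right-hand side $\E_q[v_i(q)]-p^*\le 0\le A_i$ once \eqref{eq:ob interim multi 1} holds. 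I would therefore devote essentially all of the effort to pinning down when, and by what device, the per-buyer constraint \eqref{eq:ob interim multi 1} can be made to hold for all buyers at $p^*$.
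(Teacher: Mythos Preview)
Your overall plan---show feasibility of $(\pi^*,p^*)$ and then match the revenue to the previously established upper bound $\E_q[v_{max}(q)]$---is precisely the architecture of the paper's proof. The paper also dispatches constraint \eqref{eq:ob interim multi 0} exactly as you do, by noting that its right-hand side $\E_q[v_i(q)]-p^*$ is nonpositive once $p^*=\E_q[v_{max}(q)]$, so it is implied by \eqref{eq:ob interim multi 1}.

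Where you diverge is on the per-buyer constraint \eqref{eq:ob interim multi 1}. The paper does \emph{not} verify it buyer by buyer; it simply asserts that under $\pi^*$ ``the obedience constraint \eqref{eq:ob interim multi 1} becomes $\int_{q\in Q} v_{max}(q)g(q)\,\dd q\ge p$,'' i.e.\ it silently replaces the family of constraints indexed by $i$ with their sum over $i$, and then picks $p^*$ to make that aggregate bind. Your observation that the per-buyer version $\E[v_{max}(q)\mid q\in R_i]\ge \E[v_{max}(q)]$ forces all the $A_i$ to vanish simultaneously---and hence fails for generic $(v_i)_{i\in N}$---is correct, and it exposes a genuine gap that the paper's proof does not close. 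In other words, the obstacle you flag as the ``main'' one is not addressed in the paper at all; the paper's argument goes through only if one accepts the aggregated form of \eqref{eq:ob interim multi 1} as sufficient, which it is not in general. So your plan is more careful than the paper's proof on exactly the point that matters, but neither you nor the paper supply the missing step.
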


\begin{proof}
Similar to the proof of Theorem \ref{the:post one}, we first ignore constraint \eqref{eq:ob interim multi 0}, and then we show that the optimal solution also satisfies constraint \eqref{eq:ob interim multi 0}, thus remaining optimal. 

First, we construct a signaling scheme as follows:
\begin{align*}
    \pi(q, s_i) = \begin{cases}
        1 & \text{ if } i \in \argmax_i v_i(q)\\
        0 & \text{ otherwise }
    \end{cases}.
\end{align*}
It means that for any $q$, the seller will send signal 1 to the buyer $i$ with the highest $v_i(q)$. Then the obedience constraint \eqref{eq:ob interim multi 1} becomes:
\begin{align*}
    \int_{q\in Q} v_{max} (q) g(q) \,\dd q \ge \int_{q\in Q} g(q) \,\dd q \cdot p = p
\end{align*}
So we can set the fixed price as follows:
\begin{align*}
    p &= \int_{q\in Q} v_{max} (q) g(q) \,\dd q\\
    &= \E_{q} [v_{max}(q)].
\end{align*}
Next, we show the constructed mechanism $(\pi^*, p^*)$ also satisfies the omitted constraint \eqref{eq:ob interim multi 0}. Plugging $(\pi^*, p^*)$ into constraint \eqref{eq:ob interim multi 0}, the right hand side of this constraint becomes:
\begin{align*}
    \E_q [v_i(q) ] - \E_{q} [v_{max}(q) ] \le 0.
\end{align*}
Therefore, this constraint is already implied by constraint \eqref{eq:ob interim multi 1}. 

This concludes the proof.
\end{proof}

\section{Summary}
\label{sec:summary}
So far we have studied the revenue-maximizing mechanism design problem for an item seller under two different buyers' behavior patterns and two design spaces. Next, we summarize all the results, as presented in Table~\ref{tab:summary}.

It should be noted that when buyers are ex-post rational and the mechanism is a fixed-price signaling mechanism, the result consists of two terms. The first term, represented by a “-”, indicates no obedient mechanism exists. The second term represents the revenue obtained under the assumption that buyers cannot purchase the item when the seller does not recommend it. From Table~\ref{tab:summary} we can draw the following conclusions:
\begin{itemize}
    \item When there is one buyer in the market, no matter which behavior patterns the buyer adopts, the revenues obtained from two mechanism spaces are the same.
    \item When there are multiple buyers in the market, and all buyers are ex-interim rational, the revenue generated by the fixed-price signaling mechanism is no less than the revenue generated by the fixed-price mechanism.
\end{itemize}
The first conclusion implies that allowing the seller to design information does not bring him more revenue. Conversely, the second conclusion indicates that when there are multiple ex-interim rational buyers in the market, the seller will obtain more revenue by designing information before determining the fixed price.

\section{Conclusion}
\label{sec:conclusion}

In this paper, we studied the optimal fixed-price signaling mechanism design problem for an item seller in a market with multiple buyers. We considered two types of buyers with different levels of rationality. As a benchmark, we first investigated the optimal fixed-price mechanism without signaling. Furthermore, we dived into our main questions and found that when there is only one buyer in the market, the revenues generated from both mechanisms are identical. However, when multiple buyers are in the market and all of them are ex-interim rational, the seller will obtain more revenue from using the fixed-price signaling mechanism.

\bibliographystyle{apalike}
\bibliography{bib}

\begin{thebibliography}{}

\bibitem[Badanidiyuru et~al., 2018]{badanidiyuru2018targeting}
Badanidiyuru, A., Bhawalkar, K., and Xu, H. (2018).
\newblock Targeting and signaling in ad auctions.
\newblock In {\em Proceedings of the Twenty-Ninth Annual ACM-SIAM Symposium on Discrete Algorithms}, pages 2545--2563. SIAM.

\bibitem[Bergemann et~al., 2018]{bergemann2018design}
Bergemann, D., Bonatti, A., and Smolin, A. (2018).
\newblock The design and price of information.
\newblock {\em American economic review}, 108(1):1--48.

\bibitem[Bergemann et~al., 2022]{bergemann2022optimal}
Bergemann, D., Heumann, T., Morris, S., Sorokin, C., and Winter, E. (2022).
\newblock Optimal information disclosure in classic auctions.
\newblock {\em American Economic Review: Insights}, 4(3):371--388.

\bibitem[Bergemann and Morris, 2016]{bergemann2016information}
Bergemann, D. and Morris, S. (2016).
\newblock Information design, bayesian persuasion, and bayes correlated equilibrium.
\newblock {\em American Economic Review}, 106(5):586--591.

\bibitem[Castiglioni et~al., 2020]{castiglioni2020persuading}
Castiglioni, M., Celli, A., and Gatti, N. (2020).
\newblock Persuading voters: It's easy to whisper, it's hard to speak loud.
\newblock In {\em Proceedings of the AAAI Conference on Artificial Intelligence}, volume~34, pages 1870--1877.

\bibitem[Castiglioni et~al., 2021]{Castiglioni2021SignalingIB}
Castiglioni, M., Celli, A., Marchesi, A., and Gatti, N. (2021).
\newblock Signaling in bayesian network congestion games: the subtle power of symmetry.
\newblock In {\em AAAI}.

\bibitem[Castiglioni et~al., 2022]{castiglioni2022signaling}
Castiglioni, M., Romano, G., Marchesi, A., and Gatti, N. (2022).
\newblock Signaling in posted price auctions.
\newblock In {\em Proceedings of the AAAI Conference on Artificial Intelligence}, volume~36, pages 4941--4948.

\bibitem[Chen and Zhang, 2020]{chen2020signalling}
Chen, Y. and Zhang, J. (2020).
\newblock Signalling by bayesian persuasion and pricing strategy.
\newblock {\em The Economic Journal}, 130(628):976--1007.

\bibitem[Dughmi and Xu, 2016]{dughmi2016algorithmic}
Dughmi, S. and Xu, H. (2016).
\newblock Algorithmic bayesian persuasion.
\newblock In {\em Proceedings of the forty-eighth annual ACM symposium on Theory of Computing}, pages 412--425.

\bibitem[Emek et~al., 2014]{emek2014signaling}
Emek, Y., Feldman, M., Gamzu, I., PaesLeme, R., and Tennenholtz, M. (2014).
\newblock Signaling schemes for revenue maximization.
\newblock {\em ACM Transactions on Economics and Computation (TEAC)}, 2(2):1--19.

\bibitem[Es{\H{o}} and Szentes, 2007]{esHo2007optimal}
Es{\H{o}}, P. and Szentes, B. (2007).
\newblock Optimal information disclosure in auctions and the handicap auction.
\newblock {\em The Review of Economic Studies}, 74(3):705--731.

\bibitem[Fan and Shen, 2023]{FanS23}
Fan, Z. and Shen, W. (2023).
\newblock Revenue maximization mechanisms for an uninformed mediator with communication abilities.
\newblock In {\em Proceedings of the Thirty-Second International Joint Conference on Artificial Intelligence}, pages 2693--2700.

\bibitem[Fan and Shen, 2024a]{fan2024fast}
Fan, Z. and Shen, W. (2024a).
\newblock A fast algorithm for k-memory messaging scheme design in dynamic environments with uncertainty.
\newblock In {\em Proceedings of the International Conference on Automated Planning and Scheduling}, volume~34, pages 187--195.

\bibitem[Fan and Shen, 2024b]{fan2024optimal}
Fan, Z. and Shen, W. (2024b).
\newblock Optimal mediation mechanisms in bilateral trade.
\newblock {\em arXiv preprint arXiv:2410.11683}.

\bibitem[Gan et~al., 2022]{gan2022bayesian}
Gan, J., Majumdar, R., Radanovic, G., and Singla, A. (2022).
\newblock Bayesian persuasion in sequential decision-making.
\newblock In {\em Proceedings of the AAAI Conference on Artificial Intelligence}, volume~36, pages 5025--5033.

\bibitem[Kamenica and Gentzkow, 2011]{kamenica2011bayesian}
Kamenica, E. and Gentzkow, M. (2011).
\newblock Bayesian persuasion.
\newblock {\em American Economic Review}, 101(6):2590--2615.

\bibitem[Liu et~al., 2023]{liu2023improved}
Liu, Z., Ren, Z., and Wang, Z. (2023).
\newblock Improved approximation ratios of fixed-price mechanisms in bilateral trades.
\newblock In {\em Proceedings of the 55th Annual ACM Symposium on Theory of Computing}, pages 751--760.

\bibitem[Rabinovich et~al., 2015]{rabinovich2015information}
Rabinovich, Z., Jiang, A.~X., Jain, M., and Xu, H. (2015).
\newblock Information disclosure as a means to security.
\newblock In {\em Proceedings of the 2015 International Conference on Autonomous Agents and Multiagent Systems}, pages 645--653.

\bibitem[Roesler and Szentes, 2017]{roesler2017buyer}
Roesler, A.-K. and Szentes, B. (2017).
\newblock Buyer-optimal learning and monopoly pricing.
\newblock {\em American Economic Review}, 107(7):2072--2080.

\bibitem[Wei and Green, 2022]{wei2022price}
Wei, D. and Green, B. (2022).
\newblock Price discrimination with information design.
\newblock {\em Available at SSRN}.

\bibitem[Wu et~al., 2021]{wu2021value}
Wu, M., Amin, S., and Ozdaglar, A.~E. (2021).
\newblock Value of information in bayesian routing games.
\newblock {\em Operations Research}, 69(1):148--163.

\bibitem[Xu et~al., 2015]{xu2015exploring}
Xu, H., Rabinovich, Z., Dughmi, S., and Tambe, M. (2015).
\newblock Exploring information asymmetry in two-stage security games.
\newblock In {\em Proceedings of the AAAI Conference on Artificial Intelligence}, volume~29.

\bibitem[Zhang et~al., 2021]{zhang2021fixed}
Zhang, T., Zhao, D., Zhang, W., and He, X. (2021).
\newblock Fixed-price diffusion mechanism design.
\newblock In {\em PRICAI 2021: Trends in Artificial Intelligence: 18th Pacific Rim International Conference on Artificial Intelligence, PRICAI 2021, Hanoi, Vietnam, November 8--12, 2021, Proceedings, Part I 18}, pages 49--62. Springer.

\end{thebibliography}

%\newpage

% Appendix
%\appendix

%\section*{Appendix}
% Please contact the authors for the appendix. 

\end{document}